\newcommand{\adom}[1]{\mathsf{dom}(#1)}
\newcommand{\aadom}[1]{\mathsf{adom}(#1)}
\newcommand{\ins}[1]{\mathbf{#1}}
\newcommand{\facts}[1]{\mathsf{facts}(#1)}
\newcommand{\class}[1]{\mathsf{#1}}
\newcommand{\dep}{\Sigma}
\newcommand{\ra}{\rightarrow}
\newcommand{\body}[1]{\mathsf{body}(#1)}
\newcommand{\head}[1]{\mathsf{head}(#1)}
\newcommand{\ar}[1]{\mathsf{ar}(#1)}
\begin{document}

\title*{Characterizing Data Dependencies Then and Now\thanks{This is an Author Accepted Manuscript version of the following chapter: Phokion G. Kolaitis and Andreas Pieris, Characterizing Data Dependncies Then and Now, published in Festschrift for Johann A. Makowsky, edited by Klaus Meer, Alexander Rabinovich, Elena Ravve and Andr\'{e}s Villaveces, 2024, Springer Nature Switzerland AG.}}
\author{Phokion G.\ Kolaitis\orcidID{0000-0002-8407-8563} and\\ Andreas Pieris\orcidID{0000-0003-4779-3469}}
\institute{Phokion G.\ Kolaitis \at University of California Santa Cruz \& IBM Research, \email{kolaitis@ucsc.edu}
\and Andreas Pieris \at University of Edinburgh \& University of Cyprus, \email{apieris@inf.ed.ac.uk}}

\maketitle

\abstract*{Data dependencies are integrity constraints that the data of interest must obey. During the 1980s, J\'anos Makowsky made a number of contributions to the study of data dependencies; in particular, he was the first researcher to characterize data dependencies in terms of their structural properties. The goal of this article is to first present an overview of Makowsky's work on characterizing certain classes of data dependencies and then discuss recent developments concerning characterizations of broader classes of data dependencies.}

\abstract{
	Data dependencies are integrity constraints that the data of interest must obey. During the 1980s, J\'anos Makowsky made a number of contributions to the study of data dependencies; in particular, he was the first researcher to characterize data dependencies in terms of their structural properties. The goal of this article is to first present an overview of Makowsky's work on characterizing certain classes of data dependencies and then discuss recent developments concerning characterizations of broader classes of data dependencies.
}

\section{Introduction}\label{sec:introduction}
Since E.F.\ Codd introduced  the relational data model   in 1970 \cite{DBLP:journals/cacm/Codd70}, logic and databases have enjoyed a  continuous and fruitful interaction, so much so that it has been said that ``logic and databases are inextricably intertwined" \cite{Date-logic-db}. There are two main uses of logic in databases: the use of logic as a declarative language to express queries posed on databases and the use of logic as a specification language to express data dependencies, i.e., integrity constraints that the data of interest must  obey. 

Codd \cite{DBLP:persons/Codd72} had the key insight that first-order logic can be used as a database query language, which he called relational calculus. Furthermore, Codd showed that the expressive power of relational calculus (i.e., first-order logic on databases) coincides with that of relational algebra, which is a procedural database query language based on five basic operations on relations (union, difference, cartesian produt, projection, and selection). As regards data dependencies, Codd \cite{codd1972further} introduced the class of functional dependencies, which to date constitute the most widely used such class of constraints. Soon after this, researchers introduced and studied several different classes of data dependencies, such as inclusion dependencies, join dependencies, and multi-valued dependencies. This plethora of types of data dependencies raised the question of identifying a unifying formalism for them. Logic came to the rescue as it was eventually realized that the class of \emph{embedded implicational dependencies} (EIDs)  provides such a formalism \cite{DBLP:journals/jacm/BeeriV84,DBLP:journals/jacm/Fagin82,DBLP:conf/focs/YannakakisP80}. Embedded implicational dependencies comprise two  classes of first-order sentences, the class of tuple-generating dependencies (tgds) and the class of equality-generating dependencies (egds); the latter  generalizes functional dependencies, while the former generalizes, among others, inclusion, join, and multi-valued dependencies.  Informally, a tgd asserts that if some tuples belong to some relations, then some other tuples must belong to some (perhaps different) relations, while an egd asserts that if some tuples belong to some relations, then two of the values occurring  in some of these tuples must be equal. More precisely, a tgd is a universal-existential first-order sentence of the form
\[
\forall \bar x\forall \bar y \, (\phi(\bar x, \bar y)\ \rightarrow\ \exists \bar z\psi(\bar x, \bar z)),
\]
where $\phi(\bar x,\bar y)$ and $\psi(\bar x, \bar z)$ are conjunctions of atomic formulas. Furthermore, an egd is a universal first-order sentence of the form
\[
\forall \bar x \, (\phi(\bar x)\ \rightarrow\ x_i=x_j),
\]
where $\phi(\bar x)$ is a conjunction of atomic formulas, and $x_i$, $x_j$ are  variables in $\bar x$.

During the 1980s, J\'anos Makowsky made remarkable contributions to the study of data dependencies. The first of these contributions concerns the 
implication problem for data dependencies, which by the late 1970s had emerged as the central problem in this area and, in fact, had  been called \emph{the fundamental problem of databases} (see~\cite{DBLP:conf/csl/Makowsky11}). This problem takes as input a finite set $\Sigma$ of data dependencies and a data dependency $\sigma$, and asks whether $\Sigma$ logically implies $\sigma$. Here, logical implication has two different versions: the first version is that $\sigma$ is true on every database (finite or infinite) on which every member of $\Sigma$ is true; the second version is that this implication holds in the finite, i.e., $\sigma$ is true on every finite database on which every member of $\Sigma$ is true. In the case of functional dependencies, it was known that the two versions of the implication problem coincide and that there is a polynomial-time algorithm for solving this problem. It was not known, however, whether the implication problem for EIDs was decidable or undecidable. Makowsky showed that both versions of the implication problem for EIDs are undecidable.  This result appeared in a joint paper with Chandra and Lewis \cite{DBLP:conf/stoc/ChandraLM81}, who had independently arrived at the same solution. The proof was via a reduction from the halting problem for two-counter machines. A different reduction that uses  undecidability results from equational logic was obtained by Beeri and Vardi \cite{DBLP:conf/icalp/BeeriV81} around the same time. Chandra, Lewis, and Makowsky \cite{DBLP:conf/stoc/ChandraLM81} also showed that the implication problem is decidable for \emph{full} tgds and egds, where a tgd is full if it has no existential quantifiers, i.e., it is a first-order sentence of the form
\[
\forall  \bar x \forall \bar y \, (\phi(\bar x,\bar y)\ \rightarrow\ \psi(\bar x)),
\]
where $\phi(\bar x,\bar y)$ and $\psi(\bar x)$ are conjunctions of atomic formulas.

Makowsky's second contribution to the study of data dependencies concerns   characterizations 
of classes of data dependencies. To put Makowsky's work in a proper context, let us recall three important topics  in mathematical logic. First, in his invited address at the 1950 International Congress of Mathematicians, Tarski articulated his interest in characterizing notions of ``metamathematical origin'' in ``purely mathematical terms''~\cite{tarski1950some}. A series of results in model theory eventually led to the following characterization of  definability in first-order logic: a class $C$ of structures is definable by a finite set of first-order sentences if and only if both the class $C$ and its complement $\overline{C}$ are closed under isomorphisms and ultraproducts (see \cite{DBLP:books/daglib/0067423}). Thus, the ``metamathematical'' notion of definability by a finite set of first-order sentences can be characterized in terms of purely algebraic closure properties. 
Second, there is a body of results in model theory, known as \emph{preservation theorems}, that characterize when a first-order sentence is logically equivalent to a first-order sentence of a restricted syntactic form. The prototypical preservation theorem is the {\L}o\'s-Tarski Theorem, which asserts that a first-order sentence is logically equivalent to a universal first-order sentence if and only if it is preserved under substructures. Third,  Lindstr\"om~\cite{lindstrom1969extensions} characterized first-order logic as a maximal logic that has some mild closure properties and satisfies the compactness theorem and the Skolem-L\"owenheim Theorem. This intrinsic characterization of first-order logic became the catalyst for  the development of abstract model theory, which aims at characterizing logical formalisms in terms of their properties.

Leveraging his expertise in mathematical logic, Makowsky worked on characterizations of classes of data dependencies.  In~\cite{DBLP:conf/icalp/Makowsky81}, he focused  on full tgds and egds over uni-relational databases, that is, all data dependencies considered were over a database schema consisting of a single relation symbol.  Note that much (but not all) of the work on data dependencies at that time was about uni-relational databases, perhaps because the first data dependencies studied were the functional dependencies, and they involve a single relation symbol. Makowsky obtained a preservation theorem that characterizes when a first-order sentence is logically equivalent to a finite set of full tgds and egds. This preservation theorem entails closure under subdatabases and closure under direct products. It is worth pointing out that, while this result is aligned with Tarski's goal to characterize metamathematical notions in purely mathematical terms, Makowsky questioned its relevance to databases because, as he put it, ``algebraic operations alone is not what is needed in data base theory''. In other words, Makowsky argued that characterizations of data dependencies should involve notions that are more ubiquitous in database practice. With these considerations in mind, Makowsky went on   to establish the main result in~\cite{DBLP:conf/icalp/Makowsky81}, which is a Lindstr\"om-type theorem for the class of all full tgds and egds.  Specifically, he showed that the class of all full tgds and egds is a maximal collection of first-order sentences that satisfy a locality condition, called \emph{securability}, and admit \emph{Armstrong relations}, which are relations that encapsulate precisely the full tgds and egds that are logically implied by a set of full tgds and egds.

After this, Makowsky and Vardi \cite{DBLP:journals/acta/MakowskyV86} studied data dependencies over multi-relational databases, that is, data dependencies over a database schema that consists of finitely many relation symbols. Makowsky and Vardi \cite{DBLP:journals/acta/MakowskyV86} characterized when a class of databases is axiomatizable by a set of full tgds and egds,   and also  when it is axiomatizable by a finite set of full tgds and egds. These characterizations involve closure under subdatabases, closure under direct products, suitable locality properties, and some other structural properties that will be discussed in the sequel.

During the past two decades, data dependencies have found new uses and applications in several different areas of data management and knowledge representation, including data exchange, data integration, and specification of ontologies. As a result, the interest in characterizing data dependencies has been rekindled~\cite{DBLP:conf/pods/ConsoleKP21,DBLP:conf/icdt/CateK09}; furthermore,  new preservation theorems about data dependencies have been obtained \cite{zhang2022characterizing,ZhZJ20}. In particular, using a novel notion of locality, characterizations of arbitrary tgds and egds were obtained in \cite{DBLP:conf/pods/ConsoleKP21}, thus going well beyond the characterizations of full tgds and egds established in~\cite{DBLP:journals/acta/MakowskyV86}. 
In this article, we present some of these developments in detail and discuss the motivating role played by the concepts and results in~\cite{DBLP:journals/acta/MakowskyV86}.


\section{Relational Databases and Data Dependencies}\label{sec:preliminaries}

Let $\ins{C}$ and $\ins{V}$ be disjoint countably infinite sets of constants and variables, respectively. For an integer $n > 0$, we may write $[n]$ for the set $\{1,\ldots,n\}$.

\medskip

\noindent
\textbf{Relational Databases.} A {\em (relational) schema} $\ins{S}$ is a finite set of relation symbols (or predicates) with positive arity; let $\ar{R}$ be the arity of $R$.
A {\em (relational) database} $D$ over $\ins{S} = \{R_1,\ldots,R_n\}$, or simply {\em $\ins{S}$-database}, is a tuple $(\adom{D},R_{1}^{D},\ldots,R_{n}^{D})$, where $\adom{D} \subseteq \ins{C}$ is a finite domain and $R_{1}^{D},\ldots,R_{n}^{D}$ are relations over $\adom{D}$, i.e., $R_{i}^{D} \subseteq \adom{D}^{\ar{R_i}}$ for $i \in [n]$. In logical terms, a database is simply a relational structure with finite domain. A {\em fact} of $D$ is an expression of the form $R_i(\bar c)$, where $\bar c \in R_{i}^{D}$. Let $\mathsf{facts}(D)$ be the set of facts of $D$.
The {\em active domain} of $D$, denoted $\aadom{D}$, is the set of elements of $\adom{D}$ that occur in at least one fact of $D$. 
An $\ins{S}$-database $D'$ is a {\em subdatabase} of $D$, denoted $D' \preceq D$, if $\adom{D'} \subseteq \adom{D}$ and, for each $R \in \ins{S}$, we have that  $R^{D'} = R_{\mid \adom{D'}}^{D}$, where $R_{\mid \adom{D'}}^{D}$ is  the {\em restriction of $R^{D}$ over $\adom{D'}$}, i.e., $R_{\mid \adom{D'}}^{D} = \left\{\bar c \in R^{D} \mid \bar c \in \adom{D'}^{\ar{R}}\right\}$.
A {\em homomorphism} from an $\ins{S}$-database $D$ to an $\ins{S}$-database $D'$ is a function $h : \adom{D} \ra \adom{D'}$ with the following property:  for each $i \in [n]$ and for each tuple $(c_1,\ldots,c_m) \in R_{i}^{D}$, we have that  $(h(c_1),\ldots,h(c_m)) \in R_{i}^{D'}$. We write $h : D \ra D'$ to denote that $h$ is a homomorphism from $D$ to $D'$. We also write $h(\facts{D})$ for the set $\{R(h(\bar c)) \mid R(\bar c) \in \facts{D}\}$. Finally, we say that $D$ and $D'$ are {\em isomorphic}, written $D \simeq D'$, if there is a bijection $h : \adom{D} \ra \adom{D'}$ such that $h$ is a homomorphism from $D$ to $D'$ and $h^{-1}$ is a homomorphism from $D'$ to $D$.

\medskip
\noindent \textbf{Tuple-Generating Dependencies.} An atom over $\ins{S}$ is an expression of the form $R(\bar v)$, where $R \in \ins{S}$ and $\bar v$ is an $\ar{R}$-tuple of variables from $\ins{V}$.
A {\em tuple-generating dependency} (tgd) $\sigma$ over a schema $\ins{S}$ is a constant-free first-order sentence
\[
\forall \bar x \forall \bar y \left(\phi(\bar x,\bar y)\ \ra\ \exists \bar z\, \psi(\bar x,\bar z)\right),
\]
where $\bar x, \bar y, \bar z$ are tuples of variables of $\ins{V}$, $\phi(\bar x,\bar y)$ is a (possibly empty) conjunction of atoms over $\ins{S}$, and $\psi(\bar x,\bar z)$ is a non-empty conjunction of atoms over $\ins{S}$.
For brevity, we write $\sigma$ as $\phi(\bar x,\bar y) \ra \exists \bar z\, \psi(\bar x,\bar z)$. We refer to $\phi(\bar x,\bar y)$ and $\psi(\bar x,\bar z)$ as the {\em body} and {\em head} of $\sigma$, denoted $\body{\sigma}$ and $\head{\sigma}$, respectively.
By an abuse of notation, we may treat a tuple of variables as a set of variables, and we may also treat a conjunction of atoms as a set of atoms.
A tgd is called {\em full} if it has no existentially quantified variables. Let us stress that, since the head of a tgd is by definition non-empty, full tgds mention at least one universally quantified variable, which in turn implies that their body is non-empty.
An $\ins{S}$-database $D$ \emph{satisfies a tgd} $\sigma$ as the one above, written $D \models \sigma$, if the following holds: whenever there is a function $h : \bar x \cup \bar y \ra \adom{D}$ such that $h(\phi(\bar x, \bar y)) \subseteq \facts{D}$ (as usual, we write $h(\phi(\bar x,\bar y))$ for the set $\{R(h(\bar v)) \mid R(\bar v) \in \phi(\bar x,\bar y)\}$), then there exists an extension $h'$ of $h$ such that $h'(\psi(\bar x, \bar z)) \subseteq \facts{D}$. The $\ins{S}$-database $D$ \emph{satisfies a set $\dep$ of tgds}, written $D \models \dep$, if $D \models \sigma$ for each $\sigma \in \dep$; in this case, we say that $D$ is a {\em model} of $\dep$.

\medskip
\noindent \textbf{Equality-Generating Dependencies.} An {\em equality-generating dependency} (egd) $\theta$ over a schema $\ins{S}$ is a constant-free first-order sentence
\[
\forall \bar x \left(\phi(\bar x)\ \ra\ x_i = x_j\right),
\]
where $\bar x$ is a tuple of variables of $\ins{V}$, $\phi(\bar x)$ is a (non-empty) conjunction of atoms over $\ins{S}$, and $x_i,x_j \in \bar x$.
For brevity, we write $\theta$ as $\phi(\bar x) \ra x_i = x_j$. We refer to $\phi(\bar x)$ as the {\em body} of $\theta$, denoted $\body{\theta}$.
An $\ins{S}$-database $D$ \emph{satisfies an egd} $\theta$ as the one above, written $D \models \theta$, if, whenever there is a function $h : \bar x \ra \adom{D}$ such that $h(\phi(\bar x)) \subseteq \facts{D}$, then $h(x_i) = h(x_j)$. The $\ins{S}$-database $D$ \emph{satisfies a set $\dep$ of egds}, written $D \models \dep$, if $D \models \theta$ for each egd $\theta \in \dep$, and  we say that
$D$ is a {\em model} of $\dep$.

\medskip
\noindent \textbf{Finite Axiomatizability.} Let $C$ be a collection of databases. We say that $C$ is {\em finitely axiomatizable} by tgds and egds if there are finite sets $\dep_T$ and $\dep_E$ of tgds and egds, respectively, such that for every database $D$, it holds that $D \in C$ if and only if $D \models \dep_T$ and $D \models \dep_E$. Note that in the literature we also have the notion of axiomatizability, where the sets $\dep_T$ and $\dep_E$ can be infinite, i.e., if we allow $\dep_T$ and $\dep_E$ to be infinite, then we say that $C$ is {\em axiomatizable} by tgds and egds. 
In this paper, we focus on finite axiomatizability since it is more relevant to database practice. In what follows, we consider only collections $C$ of databases that are {\em closed under isomorphisms}, i.e., if $D \in C$ and $D'$ is a database such that $D \simeq D'$, then $D' \in C$. In other words, for a collection of databases, we silently assume that it is closed under isomorphisms.


\section{Finite Axiomatizability by Full TGDs and EGDs}\label{sec:ftgds-egds}

As discussed in Section~\ref{sec:introduction}, one of Makowsky's main contributions to the study of data dependencies concerns characterizations of classes of data dependencies. The result that stands out in this context, which we will present here, is the characterization of finite sets of {\em full dependencies} obtained by Makowsky and Vardi in 1986~\cite[Theorem 5]{DBLP:journals/acta/MakowskyV86}. By definition, a full dependency is a first-order sentence of the form $\forall \bar x \forall \bar y\, (\phi(\bar x,\bar y) \ra \psi(\bar x))$, where $\phi(\bar x,\bar y)$ is a non-empty conjunction of relational atoms and $\psi(\bar x)$ is a non-empty conjunction of relational atoms and equality atoms. Clearly, every full dependency is logically equivalent to a finite set of full tgds and egds. Consequently, given a full dependency $\sigma$, there is a finite set $\dep$ of full tgds and egds such that, for every database $D$, it holds that $D$ satisfies $\sigma$ if and only if $D$ satisfies the tgds and egds of $\dep$. For this reason, in what follows we will state Theorem~5 of~\cite{DBLP:journals/acta/MakowskyV86} in terms of finite sets of full tgds and egds.

As we shall see, Theorem 5 of \cite{DBLP:journals/acta/MakowskyV86} involves the property of closure under direct products, which is formally defined below. However, as already mentioned in Section~\ref{sec:introduction}, Makowsky questioned the relevance to databases of closure under algebraic operations, such as closure under direct products. With this in mind, we further present an alternative characterization of finite sets of full tgds and egds, which uses the property of closure under intersections instead of closure under direct products.

\subsection{The Characterization by Makowsky and Vardi}

We first introduce the properties that are needed for the characterization of interest and then recall the characterization by Makowski and Vardi from~\cite{DBLP:journals/acta/MakowskyV86}.

\subsubsection*{Model-theoretic Properties}

In the sequel, fix an arbitrary schema $\ins{S} = \{R_1,\ldots,R_\ell\}$.

\medskip
\noindent \textbf{$1$-Criticality.} An $\ins{S}$-database $D = (\adom{D},R_{1}^{D},\ldots,R_{\ell}^{D})$ is said to be {\em $1$-critical} if $|\adom{D}| = 1$, that is, the domain of $D$ consists of a single element $c \in \ins{C}$, and $R_{i}^{D} = \{c\}^{\ar{R_i}}$ for each $i \in [\ell]$, i.e., $\facts{D} = \{R_i(c,\ldots,c) \mid i \in [\ell]\}$. Let us clarify that in~\cite{DBLP:journals/acta/MakowskyV86} $1$-critical databases are called {\em trivial} databases.
A collection of databases over $\ins{S}$ is {\em $1$-critical} if it contains an $1$-{\rm critical} $\ins{S}$-database.

\medskip
\noindent \textbf{Domain Independence.}
A collection $C$ of databases over $\ins{S}$ is {\em domain independent} if, for every $\ins{S}$-database $D \in C$ and every database $D'$ with 
$\facts{D} = \facts{D'}$, it holds that 
$D' \in C$. In other words, $C$ is domain independent if, for every two $\ins{S}$-databases that have the same set of facts, but not necessarily the same domain, either both are in $C$ or neither of them is in $C$.

\medskip
\noindent \textbf{Modularity.}
A collection $C$ of databases over $\ins{S}$ is {\em $n$-modular}, for $n \geq 0$, if, for every $\ins{S}$-database $D \not\in C$, there is an $\ins{S}$-database $D' \preceq D$ with $|\adom{D'}| \leq n$ such that $D'\not\in C$.
We say that $C$ is {\em modular} if it is $n$-modular for some $n \geq 0$.
Roughly, $n$-modularity provides a ``small'' database with at most $n$ domain elements as a witness to why a database does not belong to a collection of databases.
Let us clarify that in~\cite{DBLP:journals/acta/MakowskyV86}, $n$-modularity is called $n$-{\em locality}. However, here we adopt the term $n$-modularity, which has been already used in~\cite{DBLP:conf/icdt/CateK09} for a similar notion in the context of schema mappings, in order to avoid any confusion with the notion of $(n,m)$-locality from~\cite{DBLP:conf/pods/ConsoleKP21}, which will be used in the next section to handle arbitrary tgds and egds.

\medskip
\noindent \textbf{Closure Under Subdatabases.} A collection $C$ of databases is {\em closed under subdatabases} if, for every $\ins{S}$-database $D \in C$, it holds that $D' \in C$ for every $D' \preceq D$.

\medskip
\noindent \textbf{Closure Under Direct Products.} 
Assume that    $D = (\adom{D},R_{1}^{D},\ldots,R_{\ell}^{D})$ and $D' = (\adom{D'},R_{1}^{D'},\ldots,R_{\ell}^{D'})$ are two databases over a schema $\ins{S}$. The {\em direct product} of $D$ and $D'$, denoted $D \otimes D'$, is defined as the $\ins{S}$-database 
$$D'' = (\adom{D''},R_{1}^{D''},\ldots,R_{\ell}^{D''}),$$ where $\adom{D''} = \adom{D} \times \adom{D'}$, and, for each $i \in [\ell]$, $R_{i}^{D''}$ is the relation
\[ \left\{\left((a_1,b_1),\ldots,(a_{r_i},b_{r_i})\right) \mid \left(a_1,\ldots,a_{r_i}\right) \in R_{i}^{D} \text{ and } \left(b_1,\ldots,b_{r_i}\right) \in R_{i}^{D'}\right\},
\]
where $r_i=\ar{R_i}$ is the arity of the relation symbol $R_i$.
A collection $C$ of databases over $\ins{S}$ is {\em closed under direct products} if, for every two $\ins{S}$-databases $D,D' \in C$, it holds that $D \otimes D' \in C$.

\subsubsection*{The Characterization}

We can now provide the characterization of interest from~\cite{DBLP:journals/acta/MakowskyV86}. We do not discuss here the proof of the characterization since it will be derived from the more general characterization of finite sets of arbitrary tgds and egds presented in Section~\ref{sec:tgds-egds}.

\begin{theorem}[Theorem 5 in~\cite{DBLP:journals/acta/MakowskyV86}]\label{the:tgd+egd-MV86}
	Let $C$ be a collection of databases. The following statements are equivalent:
	\begin{enumerate}
		\item $C$ is finitely axiomatizable by full tgds and egds.
		\item $C$ is $1$-critical, domain independent, modular, closed under subdatabases, and closed under direct products.
	\end{enumerate}
\end{theorem}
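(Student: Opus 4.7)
The plan is to prove both directions of Theorem~\ref{the:tgd+egd-MV86}, with the reverse direction $(2) \Rightarrow (1)$ carrying the technical weight. For $(1) \Rightarrow (2)$, I would verify each of the five properties individually: the 1-critical database satisfies every full tgd and egd because mapping every variable to the unique constant makes every head atom true and every asserted equality trivial; satisfaction of a full dependency depends only on facts, yielding domain independence; for $n$-modularity, take $n$ to be the maximum number of variables occurring in any dependency of the axiomatization, since any violating homomorphism uses at most $n$ elements and the subdatabase on those elements still witnesses the violation; closure under subdatabases and closure under direct products then follow from the universal-Horn character of full dependencies, as head atoms of a full tgd use only variables in the body (so heads stay within any restricted active domain) and a product of Horn models is a Horn model.

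For $(2) \Rightarrow (1)$, let $n$ be the constant witnessing modularity. I would work with the collection $\mathcal{D}^{*}$ of all (up to isomorphism) databases $D^{*} \notin C$ with $|\adom{D^{*}}| \leq n$ and $\adom{D^{*}} = \aadom{D^{*}}$; this collection is finite since the schema has bounded arity. (Databases with empty fact-set lie in $C$ by 1-criticality, closure under subdatabases, and domain independence, so they do not appear in $\mathcal{D}^{*}$.) For each $D^{*} \in \mathcal{D}^{*}$ with elements $\{a_1,\ldots,a_m\}$, let $\phi_{D^{*}}(x_1,\ldots,x_m)$ be the canonical conjunctive formula of $D^{*}$ (one atom per fact, with $a_k$ replaced by $x_k$). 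The aim is to produce, for each such $D^{*}$, a single full dependency $\sigma_{D^{*}} = \phi_{D^{*}}(\bar x) \to E$ violated by $D^{*}$ under the identity assignment but satisfied by every $D \in C$, where the head $E$ is drawn from a finite family of \emph{extensions}: either an atom $R(x_{i_1},\ldots,x_{i_k})$ with $R(a_{i_1},\ldots,a_{i_k}) \notin \facts{D^{*}}$, or an equality $x_i = x_j$ with $i \neq j$.

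The crux is exhibiting such an extension, which I would handle by contradiction. Suppose no $E$ works; then for every extension $E$ there is $D_E \in C$ and a homomorphism $h_E : D^{*} \to D_E$ witnessing the body of $\sigma_{D^{*}}$ while violating $E$ at $h_E(\bar a)$. Form the direct product $P$ of the finite family $\{D_E\}_E$, which lies in $C$ by iterated closure under direct products, and set $b_k := (h_E(a_k))_E \in \adom{P}$ for each $k = 1,\ldots,m$. Let $P'$ be the subdatabase of $P$ on $\{b_1,\ldots,b_m\}$, so that $P' \in C$ by closure under subdatabases. A coordinate-wise check shows $a_k \mapsto b_k$ is an isomorphism $D^{*} \to P'$: facts of $D^{*}$ transfer because every $h_E$ is a homomorphism, missing facts fail to transfer because they fail at their own ``bad'' coordinate, and pairwise distinctness of the $b_k$'s is forced by the equality extensions. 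Closure under isomorphisms yields $D^{*} \in C$, a contradiction; so some $E$ works, and I define $\sigma_{D^{*}}$ using that head.

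With such an $E$ chosen for each $D^{*} \in \mathcal{D}^{*}$, the set $\dep = \{\sigma_{D^{*}} : D^{*} \in \mathcal{D}^{*}\}$ is finite and consists of full dependencies; every member of $C$ satisfies $\dep$ by construction. Conversely, if $D \notin C$, modularity supplies $D' \preceq D$ with $|\adom{D'}| \leq n$ and $D' \notin C$; restricting $D'$ to its active domain (invoked by domain independence) produces a representative $D^{*} \in \mathcal{D}^{*}$ with $D^{*} \preceq D$, and $D^{*} \not\models \sigma_{D^{*}}$ lifts to $D \not\models \sigma_{D^{*}}$ by subdatabase preservation of full dependencies. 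The main obstacle is the product-and-restrict construction in the previous paragraph: it requires simultaneously that the coordinates $b_k$ be pairwise distinct (ensured precisely by including the equality extensions in the family) and that no spurious facts creep into $P'$ (ensured by including all missing facts as extensions), orchestrating the product, subdatabase, and isomorphism closures in a single concerted step.
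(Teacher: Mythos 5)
Your proposal is correct in its essentials but follows a genuinely different route from the paper's. The paper does not prove Theorem~\ref{the:tgd+egd-MV86} directly: it derives it from the far more general Theorem~\ref{the:tgd-egd-characterization} on $(n,m)$-tgds and $n$-egds (via Corollary~\ref{cor:ftgd+egd}), and then uses Proposition~\ref{prop:(n,0)-local-char} to trade the single property of $(n,0)$-locality for the triple of domain independence, $n$-modularity, and closure under subdatabases. Your argument is instead a direct, self-contained one: small counterexample databases supplied by modularity, their diagrams turned into disjunctive dependencies $\phi_{D^*}(\bar x) \ra \bigvee_{E} E$, and a McKinsey-style elimination of disjunctions carried out via direct products. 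This is essentially the original Makowsky--Vardi strategy and runs parallel to the paper's proof of the alternative Theorem~\ref{the:ftgd+egd-intersections}, where closure under intersections plays the role your direct product plays. Your version is slicker in one respect: by taking the product over the whole family of candidate heads at once and exhibiting an isomorphic copy of $D^*$ inside it, you fuse the paper's two lemmas (the diagram lemma and the elimination-of-disjunctions lemma) into a single step. What the paper's longer route buys is the stronger Theorem~\ref{the:tgd-egd-characterization} for arbitrary tgds and egds, of which Theorem~\ref{the:tgd+egd-MV86} becomes a corollary.

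One small point you should patch: you invoke $1$-criticality only to rule out databases with empty fact set from $\mathcal{D}^{*}$ (ensuring the bodies $\phi_{D^*}$ are non-empty), but its essential use is dual, namely to guarantee that the family of candidate heads is non-empty. If some $D^* \in \mathcal{D}^{*}$ had a single domain element and all possible facts over it, there would be no missing atom and no pair of distinct elements, hence no extension $E$ at all and no full dependency of your form violated by $D^*$. This is exactly where $1$-criticality of $C$ (together with closure under isomorphisms) is needed: it places every such database in $C$ and hence outside $\mathcal{D}^{*}$, mirroring the corresponding step in the paper's proof of Lemma~\ref{lem:aux-lemma-1-full-instersection}. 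With that observation added, every $D^* \in \mathcal{D}^{*}$ admits at least one extension and your contradiction argument goes through.
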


\subsection{The Alternative Characterization}

We will present an alternative characterization of finite sets of full tgds and egds that avoids the use of closure under direct products. This is done by replacing in the characterization of Theorem~\ref{the:tgd+egd-MV86} the property of closure under direct products with the property of closure under intersections defined as follows.
Consider a schema $\ins{S} = \{R_1,\ldots,R_\ell\}$. The {\em intersection} of the $\ins{S}$-databases $D = (\adom{D},R_{1}^{D},\ldots,R_{\ell}^{D})$ and $D' = (\adom{D'},R_{1}^{D'},\ldots,R_{\ell}^{D'})$, denoted $D \cap D'$, is the database 
\[
(\adom{D} \cap \adom{D'},R_{1}^{D} \cap R_{1}^{D'},\ldots,R_{\ell}^{D} \cap R_{\ell}^{D'}). 
\]
A collection $C$ of databases over $\ins{S}$ is {\em closed under intersections} if, for every two $\ins{S}$-databases $D,D' \in C$, it holds that $D \cap D' \in C$.

Before stating and proving the promised characterization, let us briefly discuss the two tools that will be used in the proof: Robinson's \emph{method of diagrams} and McKinsey's \emph{method of eliminating disjunctions}.
In his Ph.D. thesis in 1949, Abraham Robinson introduced the notion of a \emph{diagram} of a relational structure, which, together with its variants, became a standard tool in constructing models that contain an isomorphic copy of some  structure of interest. In model theory, the method of  diagrams is typically combined with the compactness theorem to construct such models (see~\cite{DBLP:books/daglib/0067423,robinson63}). In characterizing data dependencies, however, the method of diagrams is combined with closure properties and with suitable notions of locality, instead of the compactness theorem. In 1943, J.C.C.\ McKinsey \cite{mckinsey1943decision} introduced the method of eliminating disjunctions, which makes it possible to show that, when a class  of structures satisfies certain closure properties, then, on such a class of structures, a first-order sentence of the form
$\forall \bar x (\phi(\bar x)\rightarrow \bigvee_{i=1}^k\psi_i(\bar x))$   implies one of the sentences
$\forall \bar x (\phi(\bar x)\rightarrow \psi_j(\bar x))$, for some $j \in [k]$; as a consequence of this, on such a class  of structures, the sentence   $\forall \bar x (\phi(\bar x)\rightarrow \bigvee_{i=1}^k\psi_i(\bar x))$  is equivalent to the sentence 
$\forall \bar x (\phi(\bar x)\rightarrow \psi_j(\bar x))$. McKinsey's original application of this method used closure under direct products, but the method is flexible enough to adapt to other closure properties, such as closure under intersections. It should be noted that the method of diagrams and the method of eliminating disjunctions were already utilized by Makowsky and Vardi \cite{DBLP:journals/acta/MakowskyV86} in the proof of Theorem \ref{the:tgd+egd-MV86}.

\begin{theorem} \label{the:ftgd+egd-intersections}
	Let $C$ be a collection of databases. The following are equivalent:
	\begin{enumerate}
		\item $C$ is finitely axiomatizable by full tgds and egds.
		\item $C$ is $1$-critical, domain independent, modular, closed under subdatabases, and closed under intersections.
	\end{enumerate}
\end{theorem}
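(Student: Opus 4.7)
This is a routine check of the five properties for any finite $\dep_T \cup \dep_E$ of full tgds and egds axiomatising $C$. The $1$-critical database satisfies every full dependency trivially, since all atoms $R(c,\ldots,c)$ are present and every equality on a one-element domain is automatic; domain independence is immediate since full tgds and egds refer only to atoms; $n$-modularity holds with $n$ equal to the largest body-variable count in $\dep_T \cup \dep_E$, because any violation is witnessed by a single body homomorphism hitting only finitely many active-domain elements. Closure under subdatabases and closure under intersections both rely on the same observation: in a full dependency every head variable already occurs in the body, so a body match living in a subdatabase (resp.\ in an intersection) already contains all the elements needed for its head atoms to be present there as well.

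\textbf{Direction $(2) \Rightarrow (1)$.} The plan is to follow the Makowsky--Vardi route to Theorem~\ref{the:tgd+egd-MV86}, replacing the use of closure under direct products by closure under intersections in the disjunction-elimination step. Fix $n$ witnessing $n$-modularity for $C$; up to isomorphism there are only finitely many databases with active domain of size at most $n$, and by domain independence each can be realised on a canonical carrier $\{c_1,\ldots,c_k\}$ with $k \le n$ and $\aadom{\cdot} = \adom{\cdot}$. For each such $D \notin C$ I would build a finite set $\dep_D$ of full tgds and egds that $C$ entails and that $D$ refutes, and then take $\dep = \bigcup_D \dep_D$. To construct $\dep_D$, following Robinson's method of diagrams let $\phi_D(x_1,\ldots,x_k)$ be the positive diagram of $D$, obtained by replacing each $c_i$ by $x_i$ in $\facts{D}$, and consider the disjunctive sentence
\[
\Psi_D \colon \phi_D(\bar x) \;\to\; \bigvee_{i \neq j} x_i = x_j \;\vee\; \bigvee_{\alpha} \alpha(\bar x),
\]
where $\alpha$ ranges over the atoms $R(x_{i_1},\ldots,x_{i_r})$ such that $R(c_{i_1},\ldots,c_{i_r}) \notin \facts{D}$. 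A quick appeal to closure under subdatabases and isomorphism gives $C \models \Psi_D$: any $(D^*,h) \in C$ falsifying every disjunct would have $h$ injective and $D^*|_{h(\bar x)}$ isomorphic to $D$, forcing $D \in C$, a contradiction. The crucial step is then to pass from $\Psi_D$ to a single disjunct $\alpha^*$ with $C \models \phi_D \to \alpha^*$, which is the full tgd or egd to be placed in $\dep_D$. Soundness of the resulting $\dep$ is immediate, and completeness follows from $n$-modularity combined with the fact that failures of full dependencies are inherited by superdatabases.

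\textbf{Main obstacle.} The delicate step is McKinsey's elimination of disjunctions driven by closure under intersections rather than by closure under direct products. Given a counter-witness $(D^*_\alpha,h_\alpha) \in C$ for every disjunct $\alpha$, my plan is to restrict each $D^*_\alpha$ to $h_\alpha(\bar x)$ via closure under subdatabases, relabel the image via isomorphism onto a subset of $\{c_1,\ldots,c_k\}$ on which the canonical body homomorphism behaves as the identity, invoke $1$-criticality and domain independence to pad the remaining coordinates without introducing atoms that would rescue a disjunct, and finally intersect all the relabelled witnesses in $C$ to produce a database with facts exactly $\facts{D}$, contradicting $D \notin C$. The subtlety is that the $h_\alpha$ need not be injective and may collapse variables in mutually incompatible patterns, so the alignment has to be orchestrated uniformly across all disjuncts at once; this is precisely where the combined strength of $1$-criticality, domain independence, closure under subdatabases and closure under intersections substitutes for the single step of tupling witnesses into a direct product that makes McKinsey's argument mechanical in the Makowsky--Vardi proof.
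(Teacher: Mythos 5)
Your direction $(1)\Rightarrow(2)$ is routine and correct (the paper leaves it as an exercise). Your direction $(2)\Rightarrow(1)$ follows essentially the same route as the paper's: reduce to small databases via modularity and domain independence, use Robinson's diagram to produce, for each small non-member $D$, a disjunctive dependency $\Psi_D$ that is valid on $C$ (by closure under subdatabases and isomorphisms) and false on $D$, and then invoke McKinsey's method, driven by closure under intersections, to replace each $\Psi_D$ by a single full tgd or egd. The paper organizes this slightly differently (it takes $\dep^\vee$ to be \emph{all} disjunctive dependencies with at most $n$ variables valid on $C$, proves that $\dep^\vee$ axiomatizes $C$, and then eliminates disjunctions from every member of $\dep^\vee$), but that difference is cosmetic; your soundness/completeness bookkeeping for the resulting set $\dep$ is fine.

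The genuine gap is that you never carry out the disjunction-elimination step: you present it as a ``plan'' and close by naming an unresolved ``subtlety,'' yet this step is the entire content of the direction. For comparison, the paper's Lemma~\ref{lem:aux-lemma-2-full-instersection} argues as follows: if no single disjunct $\forall\bar x(\phi\to\alpha_j)$ is valid on $C$, pick for each $j$ a database $D_j\in C$ with a witness tuple $\bar b^j$ satisfying $\phi(\bar b^j)\wedge\neg\alpha_j(\bar b^j_j)$; rename the $D_j$ by isomorphisms so that all witness tuples become one common tuple $\bar b$; form $D_\cap=\bigcap_j D_j$, which is in $C$ by closure under intersections; then $\phi(\bar b)$ holds in $D_\cap$, so some disjunct $\alpha_j(\bar b_j)$ must hold there, but $\facts{D_\cap}\subseteq\facts{D_j}$ and equalities among the entries of $\bar b$ are fixed, so $\alpha_j(\bar b_j)$ would already hold in $D_j$ --- a contradiction. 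The compatibility issue you flag (the witness assignments need not be injective and may collapse variables in incompatible patterns, so they cannot all be renamed to a single tuple) is precisely the point at which the renaming step must be justified; your proposed remedy of ``padding via $1$-criticality and domain independence'' does not obviously resolve it, so as written the argument is incomplete. Two smaller omissions: $1$-criticality is needed not for padding but to guarantee that $\Psi_D$ has at least one disjunct (a non-member cannot be the $1$-critical database, so either $|\adom{D}|\geq 2$ or some atom is missing), and closure under subdatabases is needed to put the empty database in $C$, ensuring $\facts{D}\neq\emptyset$ so that $\phi_D$ is a legitimate non-empty body.
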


\begin{proof}
	The direction $(1)\Rightarrow (2)$ is easy and we leave it as an exercise.
	We focus on the direction
	$(2) \Rightarrow (1)$. We first need to introduce a couple of auxiliary notions. 
	
	A {\em disjunctive dependency} (dd) $\delta$ over a schema $\ins{S}$ is a constant-free sentence
	\[
	\forall \bar x \left(\phi(\bar x)\ \ra\ \bigvee_{i=1}^{k}
	\psi_i(\bar x_i)\right),
	\]
	where $\bar x$ is a tuple of variables of $\ins{V}$, the expression $\phi(\bar x)$ is a (non-empty) conjunction of atoms over $\ins{S}$, and, for each $i \in [k]$, $\bar x_i \subseteq \bar x$ and the expression $\psi_i(\bar x_i)$ is either an equality formula $y=z$ with $\bar x_i = \{y,z\}$, or an atom over $\ins{S}$.
	An $\ins{S}$-database $D$ \emph{satisfies the dd} $\delta$, written $D \models \delta$, if, whenever there is a function $h : \bar x \ra \adom{D}$ with $h(\phi(\bar x)) \subseteq \facts{D}$, then there is $i \in [k]$ such that, if $\psi_i(\bar y_i)$ is $y=z$, then $h(y) = h(z)$, and if $\psi_i(\bar y_i)$ is $R(\bar x_i)$, then $h(R(\bar x_i)) \in \facts{D}$. We say that the database $D$ \emph{satisfies a set} $\dep$ of dds or that  $D$ is a {\em model} of $\dep$, written $D \models \dep$, if $D \models \delta$ for each $\delta \in \dep$.

	We further need the notion of the diagram of a database. Let $D$ be an $\ins{S}$-database such that $\facts{D} \neq \emptyset$. Let $A$ be the set of atomic formulas that can be formed using predicates from $\ins{S}$ and constants from $\adom{D}$. The {\em diagram of $D$}, denoted $\Delta_D$, is 
	\[
	\bigwedge_{\alpha \in \facts{D}} \alpha\ \wedge\ \bigwedge_{\alpha \in A \setminus \facts{D}} \neg \alpha\ \wedge\ \bigwedge_{c,d \in \adom{D} \text{ and } c \neq d} \neg (c=d).
	\]
	Note that, since $\facts{D}\neq \emptyset$, the conjunction $\bigwedge_{\alpha \in \facts{D}} \alpha$ is non-empty.

	Let $C$ be collection of databases over $\ins{S}$  possessing the properties in the second statement of Theorem \ref{the:ftgd+egd-intersections}. In particular, since $C$ is modular, there is some integer $n \geq 0$ such that $C$ is $n$-modular. Moreover, since $C$ is closed under subdatabases, we have that $C$ contains the empty database, hence $n$ must be a positive integer.

 Let $\dep^{\vee}$ be the set of all dds over $\ins{S}$ with at most $n$ variables that are satisfied by every database $D \in C$. It is clear that $\dep^\vee$ is finite (up to logical equivalence). We proceed to show the following technical lemma.

	\begin{lemma}\label{lem:aux-lemma-1-full-instersection}
		For each $\ins{S}$-database $D$, we have that $D \in C$ if and only $D \models \dep^\vee$.
	\end{lemma}
	
	\begin{proof}[Proof of Lemma~\ref{lem:aux-lemma-1-full-instersection}]
		Fix an $\ins{S}$-database $D$. From the definition of $\dep^{\vee}$, $D \in C$ implies $D \models \dep^{\vee}$. 
		For the other direction, we will show that if $D \not \in C$, then $D \not \models \dep^{\vee}$. Assume that $D \not\in C$.  Since $C$ is $n$-modular, there exists an $\ins{S}$-database $D_n$ such that $D_n \preceq D$, $|\adom{D_n}| \leq n$, and $D_n \not\in C$. 
		Since $C$ is domain independent, we can assume that $\adom{D_n} = \aadom{D_n}$.
		We will show that $D_n \not\models \dep^\vee$, which in turn implies that $D \not\models \dep^\vee$ since every sentence in $\dep^\vee$ is universal and it is well-known that universal first-order sentences are preserved under subdatabases.  
		
		To show that $D_n \not\models \dep^\vee$, it suffices to find a dd $\delta \in \dep^\vee$ such that $D_n \not \models \delta$. We will use the diagram of $D_n$ to construct this desired $\delta$.
		Let $\Delta_{D_n}$ be the diagram of $D_n$, and let
		$\Phi_{D_n}(\bar x)$ be the quantifier-free first-order formula obtained from $\Delta_{D_n}$ by replacing each $c \in \adom{D_n}$ with a new variable $x_c \in \ins{V}$. We claim that the following statements are true:
		\begin{enumerate}
			\item $D_n \models \exists \bar x \, \Phi_{D_n}(\bar x)$.
			\item For every $D' \in C$, we have that $D' \models \neg \exists \bar x \, \Phi_{D_n}(\bar x)$.
			\item The sentence $\neg \exists \bar x \, \Phi_{D_n}(\bar x)$ is logically equivalent to a dd $\delta$.  
		\end{enumerate}
		The first statement is obviously true by the construction of the sentence $\exists \bar x \, \Phi_{D_n}(\bar x)$ (each existential quantifier $\exists x_c$ in $\exists \bar x$ can be witnessed by the element $c$ in $\adom{D_n}$). 
		For the second statement, if there is some database $D'\in C$ such that $D' \models \exists \bar x \, \Phi_{D_n}(\bar x)$, then $D'$ must contain a subdatabase $D''$ that is isomorphic to $D_n$. Since $C$ is closed under subdatabases and isomorphisms, it follows that $D_n \in C$, which is a contradiction. 
		For the third statement, observe first that $\facts{D_n} \neq \emptyset$ because $D_n$ is not in $C$, while $C$, being closed under subdatabases, contains the empty database. This implies that $\Delta_{D_n}$ contains at least one positive relational atom $\alpha$; hence, the conjunction $\bigwedge_{\alpha \in \facts{D_n}} \alpha$ is non-empty. Furthermore, since $D_n$ is not in $C$ and since $C$ is $1$-critical and closed under isomorphisms, we conclude that $D_n$ is not an $1$-critical database. This implies that either $|\adom{D_n}| \geq 2$ or $|\adom{D_n}| = 1$ and there is a relation symbol $R \in \ins{S}$ such that $R^{D_n}$ is empty. Therefore, $\Delta_{D_n}$ contains either the negation of an equality atom $\beta$ or the negation of a relational atom $\gamma$. Consequently, $\neg \exists \bar x \, \Phi_{D_n}(\bar x)$ is logically equivalent to a dd $\delta$. Therefore, all three statements have been established.
		
		Taken together,
		these three statements imply that $\delta$ is a dd in $\dep^\vee$ that is false on $D_n$, and hence, $D_n\not \models \dep^\vee$. This completes the proof of Lemma \ref{lem:aux-lemma-1-full-instersection}. 
	\end{proof}

	Observe that the property of closure under intersections was not used in the proof of Lemma \ref{lem:aux-lemma-1-full-instersection}. This property will be used in the proof of the next technical lemma.

	\begin{lemma}\label{lem:aux-lemma-2-full-instersection}
		Assume that 
$
		\forall \bar x (\phi(\bar x) \ra
		\bigvee_{i=1}^{k} \alpha_i(\bar x_i))$ 
  is a disjunctive dependency that belongs to $\dep^\vee$.
		There is $j \in [k]$ such that $\forall \bar x (\phi(\bar x) \ra
		\alpha_j(\bar x_j))$ also belongs to $\dep^\vee$.	
	\end{lemma}
	\begin{proof}[Proof of Lemma~\ref{lem:aux-lemma-2-full-instersection}]
 Note that for every
 $j\in [k]$, the sentence
$\forall \bar x (\phi(\bar x) \ra
		\alpha_j(\bar x_j))$ is either a full tgd or an egd, hence it is a disjunctive dependency.
  Towards a contradiction, assume that none of these sentences belongs to $\dep^\vee$.
  Since $\dep^\vee$ consists of all disjunctive dependencies that are satisfied by every database in $C$, it follows that for every $j\in [k]$,
  there is a database
		$D_j \in C$ such that $D_j \not \models \forall \bar x (\phi(\bar x) \ra
		\alpha_j(\bar x_j))$. Hence, for every $j \in [k]$, there is a tuple $\bar b^j$ of elements in $\adom{D_j}$ such that $D_j \models \phi(\bar b^j)\land \neg \alpha_j(\bar b^j_j)$. Since $C$ is closed under isomorphisms, we may assume that these tuples are the same tuple $\bar b$, i.e., we may assume that for every $j$ and $\ell$ with $j,\ell \in [k]$, we have that ${\bar b^j}={\bar b^\ell}=\bar b$.  
		Let $D_{\cap}$ be the intersection of the databases $D_j$, that is, $D_{\cap} = \bigcap_{j\in [k]} D_j$. Since $C$ is closed under intersections, we have that $D_{\cap}$ belongs to $C$;   consequently, we have that $D_{\cap} \models \forall \bar x (\phi(\bar x) \ra \bigvee_{i=1}^{k} \alpha_i(\bar x_i))$. 
		Since the elements of the tuple $\bar b$ belong to $\adom{D_j}$ for every $j \in [k]$, it follows that they also belong to $\adom{D_{\cap}}$. Therefore, and since also $\phi(\bar b)$ holds, there is some $j \in [k]$ such that $\alpha_j(\bar b_j)$ holds, which contradicts the fact that $\neg \alpha_j(\bar b_j)$ holds (since $\bar b$ witnesses that $D_j\not \models \forall \bar x (\phi(\bar x) \ra \alpha_j(\bar x_j))$. This completes the proof that  there is a $j \in [k]$ such that the formula $\forall \bar x (\phi(\bar x) \ra
		\alpha_j(\bar x_i))$ belongs to $\dep^\vee$.
\end{proof}

  With the two preceding lemmas at hand, we are now ready to complete the proof of Theorem \ref{the:ftgd+egd-intersections}.  
    Let $\dep$  be the set of full tgds and egds in $\Sigma^\vee$, i.e.,
		\[\dep = 
		\left\{\delta \in \dep^\vee \mid \delta \text{ is a full tgd or an egd} \right\}.
		\]
Note that $\dep$ is a finite set because it is a subset of the finite set $\dep^\vee$. 
%
We proceed to show that, for each $\bf S$-database $D$, it holds that $D\in C$ if and only if $D\models \dep$.

\medskip

($\Rightarrow$) If $D\in C$, then, by the definition of $\dep^\vee$, $D\models \dep^\vee$; hence, $D \models \dep$ since $\dep \subseteq \dep^\vee$. 

($\Leftarrow$) Assume now that $D$ is an $\ins{S}$-database such that $D\models \dep$. We need to show that $D\in C$. By Lemma \ref{lem:aux-lemma-1-full-instersection}, it suffices to show that $D\models \dep^\vee$. Let 
$\forall \bar x (\phi(\bar x) \ra \bigvee_{i=1}^{k} \alpha_i(\bar x_i))$ be a disjunctive dependency in $\dep^\vee$. By Lemma \ref{lem:aux-lemma-2-full-instersection}, there is $j\in [k]$ such that $\forall \bar x (\phi(\bar x) \ra \alpha_j(\bar x_j))$ belongs to $\dep^\vee$.  Since this  sentence is either a full tgd or an egd, we have that it belongs to $\dep$; hence, $D \models  \forall \bar x (\phi(\bar x) \ra \alpha_j(\bar x_j))$. It is obvious that $\forall \bar x (\phi(\bar x) \ra \alpha_j(\bar x_j))$ logically implies the disjunctive dependency $\forall \bar x (\phi(\bar x) \ra \bigvee_{i=1}^{k} \alpha_i(\bar x_i))$, and therefore, $D\models \forall \bar x (\phi(\bar x) \ra \bigvee_{i=1}^{k} \alpha_i(\bar x_i))$. This completes the proof of the above claim.

\medskip

Thus, we have established that $C$ is axiomatizable by the set $\dep$, which is a finite set of full tgds and egds. This completes the proof of Theorem~\ref{the:ftgd+egd-intersections}.
\end{proof}


\section{Finite Axiomatizability by TGDs and EGDs}\label{sec:tgds-egds}

The fact that Theorems~\ref{the:tgd+egd-MV86} and~\ref{the:ftgd+egd-intersections} deal only with full tgds, i.e., tgds without existentially quantified variables, leads to the following natural question: is there an analogous characterization for arbitrary tgds and egds? Such a characterization has been recently established in~\cite{DBLP:conf/pods/ConsoleKP21} for the case of arbitrary (finite or infinite) relational structures. In what follows, we present the characterization from~\cite{DBLP:conf/pods/ConsoleKP21} for databases, i.e., finite relational structures. Apart from the standard properties of $1$-criticality and closure under direct products, already used in Section~\ref{sec:ftgds-egds}, we are going to use a novel locality property. The latter locality property was introduced in~\cite{DBLP:conf/pods/ConsoleKP21} for arbitrary structures, but we are going to adapt it for databases.
We proceed to introduce this notion of locality and show that every collection $C$ of databases that is finitely axiomatizable by tgds and egds enjoys this property. We then present the characterization of arbitrary tgds and egds and discuss how it allows us to derive the characterization for full tgds and egds from~\cite{DBLP:journals/acta/MakowskyV86}, namely Theorem~\ref{the:tgd+egd-MV86}.

\subsection{Locality}\label{sec:locality}

The locality property of interest relies on the notion of local embedding of a collection of databases in a database.
Roughly speaking, a collection $C$ of databases over a schema $\ins{S}$ is locally embeddable in an $\ins{S}$-database $D$ if, for every subdatabase $E$ of $D$ with a bounded number of active domain elements (i.e., domain elements that occur in $\facts{E}$), we can find a database $D_E \in C$ such that every local neighbour of $E$ in $D_E$ (i.e., subdatabases of $D_E$ that contain $E$ and have a bounded number of additional active domain elements that do not occur in $\facts{E}$), can be embedded in $D$ while preserving $E$.
We call the collection $C$ of databases local if, for every $\ins{S}$-database $D$, $C$ is locally embeddable in $D$ implies that $D$ belongs to $C$.
We proceed to formalize the above high-level description.

\begin{figure}[t]
	\centering
	\includegraphics[scale=0.7]{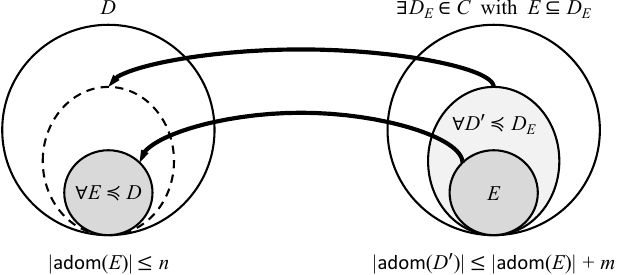}
	\caption{$C$ is $(n,m)$-locally embeddable in $D$.}
	\label{fig:embedding}
\end{figure}

Recall that the active domain of a database $D$, denoted $\aadom{D}$, is the set of elements of $\adom{D}$ that occur in at least one fact of $D$.
Consider an $\ins{S}$-database $D'$ and an $\ins{S}$-database $D'' \subseteq D'$, i.e., $\facts{D''} \subseteq \facts{D'}$. The {\em $m$-neighbourhood of $D''$ in $D'$} is defined as the set of $\ins{S}$-databases
\[
\{E \mid \aadom{D''} \subseteq \aadom{E},\ E \preceq D' \text{ and } |\aadom{E}| \leq |\aadom{D''}| + m\},
\]
that is, all the subdatabases of $D'$ such that their facts contain constants from $\aadom{D''}$ and at most $m$ additional elements not occurring in $\aadom{D''}$.
It is easy to verify that the $m$-neighbourhood of $D''$ in $D'$ is actually the set of $\ins{S}$-databases
\[
\{E \mid D'' \subseteq E,\ E \preceq D' \text{ and } |\aadom{E}| \leq |\aadom{D''}| + m\},
\] 
i.e., the set all subdatabases of $D'$ that contain $D''$ and their facts mention at most $m$ additional elements not occurring in the facts of $D''$. Indeed, if $D'' \subseteq D'$ and $E \preceq D'$, then we have that $\aadom{D''} \subseteq \aadom{E}$ if and only if $D'' \subseteq E$.

Consider a collection $C$ of databases over $\ins{S}$ and an $\ins{S}$-database $D$. For integers $n,m \geq 0$, we say that {\em $C$ is $(n,m)$-locally embeddable in $D$} if, for every $E \preceq D$ with $|\aadom{E}| \leq n$, there is $D_E \in C$ such that $E \subseteq D_E$, and for every $D'$ in the $m$-neighbourhood of $E$ in $D_E$, there is a function $h : \aadom{D'} \ra \aadom{D}$
such that $h$ is the identity on $\aadom{E}$ and  $h(\facts{D'}) \subseteq \facts{D}$.
The notion of $(n,m)$-local embeddability is illustrated in Figure~\ref{fig:embedding}; the circles represent the set of facts of the databases.
Clearly, for every $\ins{S}$-database $E$, we have that $\aadom{E} \subseteq \adom{E}$, and this containment may be a proper one. Observe, however, that the notion of a collection $C$ of databases being $(n,m)$-locally embeddable in $D$ only depends on $\aadom{E}$ of $E$ and the set of facts of $E$, and not on $\adom{E}$ of $E$. In turn, this implies that, when showing that $C$ is $(n,m)$-locally embeddable in $D$, it suffices to focus our attention on subdatabases $E$ of $D$ such that $\aadom{E} = \adom{E}$. We state this observation as a separate lemma, which will be used later in the paper.

\begin{lemma} \label{lem:adom} 
	Consider a collection $C$ of databases, integers $n,m \geq 0$, and a database $D$. The following statement are equivalent:
	\begin{enumerate}
		\item $C$ is $(n,m)$-locally embeddable in $D$. 
		\item For every $E \preceq D$ with $\aadom{E} = \adom{E}$ and  $|\aadom{E}| \leq n$, there is $D_E \in C$ such that $E \subseteq D_E$, and for every $D'$ in the $m$-neighbourhood of $E$ in $D_E$, there is a function $h : \aadom{D'} \ra \aadom{D}$ such that $h$ is the identity on $\aadom{E}$ and  $h(\facts{D'}) \subseteq \facts{D}$.
	\end{enumerate}
\end{lemma}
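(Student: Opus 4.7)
The plan is to observe that statement (2) is a restriction of (1) to a subclass of subdatabases, so the direction $(1)\Rightarrow(2)$ is immediate; the real work is in $(2)\Rightarrow(1)$, where we need to show that the condition for subdatabases $E$ with $\aadom{E} = \adom{E}$ is already strong enough to deduce the condition for arbitrary subdatabases with $|\aadom{E}| \leq n$.

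For $(2)\Rightarrow(1)$, I would start with an arbitrary $E \preceq D$ satisfying $|\aadom{E}| \leq n$, and pass to its ``active part''. Namely, define $E^\star$ to be the $\ins{S}$-database whose domain is $\aadom{E^\star} = \adom{E^\star} := \aadom{E}$ and whose relations coincide with those of $E$. The first verification is that $E^\star \preceq D$. This follows because every tuple in a relation of $E$ must consist entirely of active-domain elements, so for each $R \in \ins{S}$ we have $R^E = R^{E^\star}$ and this set equals $R^D_{\mid \aadom{E}}$; combined with $\aadom{E} \subseteq \adom{D}$, this gives $E^\star \preceq D$. Note also that $|\aadom{E^\star}| = |\aadom{E}| \leq n$, so hypothesis (2) applies to $E^\star$, yielding a database $D_{E^\star} \in C$ with $E^\star \subseteq D_{E^\star}$ and the local embedding property for every element of the $m$-neighbourhood of $E^\star$ in $D_{E^\star}$.

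Next, I would show that choosing $D_E := D_{E^\star}$ witnesses the condition in statement (1) for the original $E$. Since $\facts{E} = \facts{E^\star} \subseteq \facts{D_{E^\star}}$, the containment $E \subseteq D_E$ holds. The crucial observation is that, because $\aadom{E} = \aadom{E^\star}$, the $m$-neighbourhood of $E$ in $D_{E^\star}$ is literally the same set of databases as the $m$-neighbourhood of $E^\star$ in $D_{E^\star}$. Therefore any $D'$ in the $m$-neighbourhood of $E$ in $D_E$ admits a function $h : \aadom{D'} \ra \aadom{D}$ that is the identity on $\aadom{E^\star} = \aadom{E}$ and satisfies $h(\facts{D'}) \subseteq \facts{D}$, which is exactly what is required.

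The main obstacle, though a mild one, is verifying that $E^\star \preceq D$, which requires unpacking the definition of subdatabase and using that the restriction $R^D_{\mid \aadom{E}}$ coincides with $R^E$; this boils down to the fact that facts of $E$ only mention active elements, so shrinking the domain from $\adom{E}$ to $\aadom{E}$ does not remove any tuples. Once this is in hand, the rest of the argument is essentially a bookkeeping exercise comparing $m$-neighbourhoods, since the notion of $m$-neighbourhood depends only on active domains and on the fact-containment preorder.
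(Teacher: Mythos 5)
Your proof is correct and follows exactly the reasoning the paper uses (the paper merely records the lemma as an observation that $(n,m)$-local embeddability depends only on $\aadom{E}$ and $\facts{E}$, not on $\adom{E}$); your passage to $E^\star$, the verification that $E^\star \preceq D$, and the identification of the two $m$-neighbourhoods spell out precisely that observation.
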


We are now ready to give the definition of the central notion of locality.

\begin{definition}[Locality]\label{def:locality}
	A collection $C$ of databases over $\ins{S}$ is {\em $(n,m)$-local}, for $n,m \geq 0$, if, for every $\ins{S}$-database $D$, the following holds: $C$ is $(n,m)$-locally embeddable in $D$ implies $D \in C$. We say that $C$ is {\em local} if it is $(n,m)$-local for some $n,m \geq 0$.
\end{definition}

We can now show the following technical lemma that essentially states that every collection of databases that is finitely axiomatizable by tgds with at most $n$ universally and $m$ existentially quantified variables, and egds with at most $n$ universally quantified variables, is $(n,m)$-local. 
A tgd is called {\em $(n,m)$-tgd}, for $n \geq 0$ and $m >0$, or $n > 0$ and $m \geq 0$, if it mentions at most $n$ universally and $m$ existentially quantified variables.
Moreover, an egd is called {\em $n$-egd}, for $n >0$, if it mentions at most $n$ universally quantified variables.
For notational convenience, we also define the corner cases of $(0,0)$-tgd and $0$-egd as the truth value $\mathsf{true}$, i.e., as a tautology.

\begin{figure}[t]
	\centering
	\includegraphics[scale=0.7]{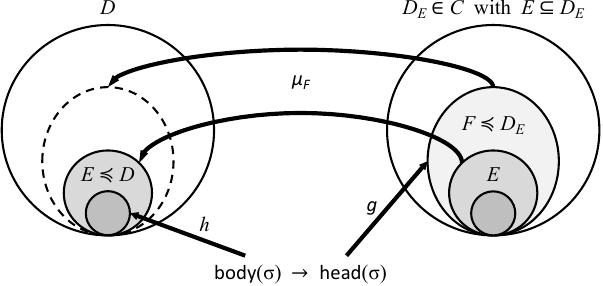}
	\caption{The function $\lambda = \mu_L \circ g$ in the proof of Lemma~\ref{lem:locality}.}
	\label{fig:proof}
\end{figure}

\begin{lemma}\label{lem:locality}
	For integers $n,m \geq 0$, every collection $C$ of databases that is finitely axiomatizable by $(n,m)$-tgds and $n$-egds is $(n,m)$-local.
\end{lemma}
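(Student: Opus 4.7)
The plan is to fix a finite axiomatization $\dep = \dep_T \cup \dep_E$ of $C$ in which every tgd in $\dep_T$ is an $(n,m)$-tgd and every egd in $\dep_E$ is an $n$-egd, and then to verify directly that whenever $C$ is $(n,m)$-locally embeddable in a database $D$, each dependency in $\dep$ is satisfied by $D$. The general strategy is, given a trigger of some dependency in $D$, to zoom in on a small subdatabase $E \preceq D$ containing the relevant tuple, pull $E$ up to a witness $D_E \in C$ via $(n,m)$-local embeddability, use $D_E \models \dep$ to obtain the desired conclusion inside $D_E$, and then push it back down to $D$ using the embedding guaranteed by local embeddability.

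The egd case is almost immediate. If an egd $\theta: \phi(\bar x) \ra x_i = x_j$ with $|\bar x| \leq n$ is triggered by a function $h$ in $D$, take $E \preceq D$ with $\adom{E} = h(\bar x)$, so that $|\aadom{E}| \leq n$ and $h(\phi(\bar x)) \subseteq \facts{E}$. Local embeddability gives some $D_E \in C$ with $E \subseteq D_E$, hence $h(\phi(\bar x)) \subseteq \facts{D_E}$. Since $D_E \models \theta$, we conclude $h(x_i) = h(x_j)$.

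The tgd case is the interesting one and uses the full $(n,m)$-locality. Given a trigger $h$ of a tgd $\sigma: \phi(\bar x) \ra \exists \bar z\, \psi(\bar x, \bar z)$ in $D$, I would again construct $E \preceq D$ with $\adom{E} = h(\bar x)$ and obtain $D_E \in C$ containing $E$. Since $D_E \models \sigma$, there is an extension $g$ of $h$ such that $g(\psi(\bar x, \bar z)) \subseteq \facts{D_E}$. The key move is to isolate the subdatabase $D' \preceq D_E$ with active domain $\aadom{E} \cup g(\bar z)$. Then $D'$ contains $E$, sits in the $m$-neighbourhood of $E$ in $D_E$ (as $|\aadom{D'}| \leq |\aadom{E}| + m$), and satisfies $g(\psi(\bar x, \bar z)) \subseteq \facts{D'}$. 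Local embeddability then yields $\lambda : \aadom{D'} \ra \aadom{D}$ which is the identity on $\aadom{E}$ and which satisfies $\lambda(\facts{D'}) \subseteq \facts{D}$. Setting $h' := \lambda \circ g$ yields an extension of $h$ (because $\lambda$ fixes $h(\bar x) \subseteq \aadom{E}$) with $h'(\psi(\bar x, \bar z)) \subseteq \lambda(\facts{D'}) \subseteq \facts{D}$, as required.

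The main subtlety to watch out for is ensuring that $\lambda \circ g$ genuinely extends the original $h$: this amounts to showing that $g(\bar x) = h(\bar x)$ lies in $\aadom{E}$, where $\lambda$ acts as the identity. This in turn relies on choosing $E$ so that $\adom{E} \supseteq h(\bar x)$ together with the fact that every universally quantified variable of $\sigma$ appears in its body, forcing $h(\bar x) \subseteq \aadom{E}$. Everything else is routine bookkeeping with the definitions of subdatabase, $m$-neighbourhood, and $(n,m)$-local embeddability; the corner cases $(0,0)$-tgd and $0$-egd are vacuous, and the $(0,m)$-tgd case falls out of the same argument with $E$ the empty database.
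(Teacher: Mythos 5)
Your proposal is correct and follows essentially the same route as the paper's proof: restrict $D$ to the image of the trigger to obtain a small subdatabase $E$, use $(n,m)$-local embeddability to lift to a witness $D_E \in C$, satisfy the dependency there, and (for tgds) push the witnessing extension $g$ back into $D$ by composing with the embedding of the member of the $m$-neighbourhood of $E$ spanned by $h(\phi) \cup g(\psi)$ --- your $\lambda \circ g$ is precisely the paper's $\mu_F \circ g$. The subtlety you flag (that the embedding fixes $h(\bar x) \subseteq \aadom{E}$, so the composition genuinely extends $h$) is exactly the point the paper also verifies.
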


\begin{proof}
	Let $C$ be a collection of databases over a schema $\ins{S}$ that is finitely axiomatizable by $(n,m)$-tgds and $n$-egds. By definition, there is a finite set $\dep_T$ of $(n,m)$-tgds and a finite set $\dep_E$ of $n$-egds such that, for every $\ins{S}$-database $D$, we have that $D \in C$ iff $D \models \dep_T$ and $D \models \dep_E$.
	Consider an $\ins{S}$-database $D$ and assume that $C$ is $(n,m)$-locally embeddable in $D$. We proceed to show that $D \in C$, i.e., $D \models \dep_T$ and $D \models \dep_E$.
	
	We first show that $D \models \dep_T$. Consider a tgd $\sigma \in \dep_T$, other than the $(0,0)$-tgd that is trivially satisfied by $D$, of the form $\phi(\bar x, \bar y) \ra \exists \bar z \, \psi(\bar x, \bar z)$, and assume that there exists a function $h : \bar x \cup \bar y \ra \adom{D}$ such that $h(\phi(\bar x, \bar y)) \subseteq \facts{D}$. We show that there exists an extension $\lambda$ of $h$ such that $\lambda(\psi(\bar x, \bar z)) \subseteq \facts{D}$; the existence of $\lambda$ is illustrated in Figure~\ref{fig:proof}.
	Let $E$ be the database $(\adom{E},R_{1}^{E},\ldots,R_{\ell}^{E})$, where $\adom{E}$ is the set of constants occurring in $h(\phi(\bar x,\bar y))$, and, for each $i \in [\ell]$, $R_{i}^{E} = {R_{i}^{D}}_{|E}$. It is clear that $E \preceq D$ with $|\aadom{E}| \leq n$ since $\phi(\bar x,\bar y)$ mentions at most $n$ variables. Since, by hypothesis, $C$ is $(n,m)$-locally embeddable in $D$, we get that there exists $D_E \in C$ such that $E \subseteq D_E$, and, for every $D'$ in the $m$-neighbourhood of $E$ in $D_E$, there is a function $\mu_{D'} : \aadom{D'} \ra \aadom{D}$, which is the identity on $\aadom{E}$, such that $\mu_{D'}(\facts{D'}) \subseteq \facts{D}$. It is clear that $h(\phi(\bar x,\bar y)) \subseteq \facts{D_E}$. Since $D_E \in C$, or, equivalently, $D_E \models \dep$, there exists an extension $g$ of $h$ such that $g(\psi(\bar x,\bar z)) \subseteq \facts{D_E}$.
	Let $F = (\adom{F},R_{1}^{F},\ldots,R_{\ell}^{F})$, where $\adom{F}$ are the constants occurring in $h(\phi(\bar x,\bar y)) \cup g(\psi(\bar x,\bar z))$, and, for each $i \in [\ell]$, $R_{i}^{F} = {R_{i}^{D_E}}_{|\adom{F}}$. It is clear that $F$ is in the $m$-neighbourhood of $E$ in $D_E$ since $\bar z$ has at most $m$ variables. Therefore, there is a function $\mu_F :\aadom{F} \ra \aadom{D}$, which is the identity on $\aadom{E}$, such that $\mu_F(\facts{F}) \subseteq \facts{D}$.
	Consider the function $\lambda = \mu_F \circ g$. Since $g$ is an extension of $h$, and $\mu_F$ is the identity on the elements occurring in $h(\phi(\bar x,\bar y))$, we get that $\lambda(v) = h(v)$ for each variable $v$ in $\phi(\bar x,\bar y)$, and thus, $\lambda$ is an extension of $h$. Moreover, since $g(\psi(\bar x,\bar z)) \subseteq \facts{F}$, we get that $\lambda(\psi(\bar x,\bar z)) \subseteq \facts{D}$.
	
	We now show that $D \models \dep_E$. Consider an egd $\theta \in \dep_E$, other than the $0$-egd that is trivially satisfied by $D$, of the form $\phi(\bar x) \ra x_i = x_j$, and assume there exists a function $h : \bar x \ra \adom{D}$ such that $h(\phi(\bar x)) \subseteq \facts{D}$. We show that $h(x_i) = h(x_j)$.
	Let $E = (\adom{E},R_{1}^{E},\ldots,R_{\ell}^{E})$, where $\adom{E}$ is the set of constants occurring in $h(\phi(\bar x))$, and, for each $i \in [\ell]$, $R_{i}^{E} = {R_{i}^{D}}_{|E}$. It is clear that $E \preceq D$ with $|\aadom{E}| \leq n$ since $\phi(\bar x)$ mentions at most $n$ variables. Since, by hypothesis, $C$ is $(n,m)$-locally embeddable in $D$, there exists $D_E \in C$ such that $E \subseteq D_E$. Observe that $E \subseteq D_E$ implies that $h(\phi(\bar x)) \subseteq E_E$. Since $D_E \models \dep_E$, we get that $h(x_i) = h(x_j)$.
\end{proof}

It is useful to observe that locality implies domain independence.

\begin{lemma}\label{lem:domain-independence}
	If a collection of databases is local, then it is also domain independent.
\end{lemma}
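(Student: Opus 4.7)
The plan is to unwind the definition of locality and use the target database itself as the witness. Suppose $C$ is $(n,m)$-local for some $n,m \geq 0$, let $D \in C$, and let $D'$ be a database with $\facts{D} = \facts{D'}$. My goal is to show $D' \in C$, and by $(n,m)$-locality it suffices to establish that $C$ is $(n,m)$-locally embeddable in $D'$.

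The first observation I will record is that $\facts{D} = \facts{D'}$ forces $\aadom{D} = \aadom{D'}$ and, for every $R \in \ins{S}$, $R^{D} = R^{D'}$, because the active domain and each relation are determined by the set of facts. So $D$ and $D'$ differ only, possibly, in the inactive portion of $\adom{\cdot}$. Next, using Lemma~\ref{lem:adom}, I can restrict attention to subdatabases $E \preceq D'$ with $\aadom{E} = \adom{E}$ and $|\aadom{E}| \leq n$; for any such $E$ we have $\adom{E} \subseteq \aadom{D'} = \aadom{D}$ and $R^{E} = R^{D'}_{\mid \adom{E}} = R^{D}_{\mid \adom{E}}$, so $E \preceq D$ as well.

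The key step is then to choose $D_E := D$ as the witness demanded by $(n,m)$-local embeddability. This is a legitimate choice because $D \in C$ and $E \subseteq D$. For any $D''$ in the $m$-neighbourhood of $E$ in $D = D_E$, one has $\aadom{D''} \subseteq \aadom{D} = \aadom{D'}$ and $\facts{D''} \subseteq \facts{D} = \facts{D'}$, so the identity function on $\aadom{D''}$ is already a map $\aadom{D''} \to \aadom{D'}$ that fixes $\aadom{E}$ pointwise and sends $\facts{D''}$ into $\facts{D'}$. Thus $C$ is $(n,m)$-locally embeddable in $D'$, and invoking $(n,m)$-locality of $C$ gives $D' \in C$, as required.

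There is essentially no real obstacle here; the proof is conceptually clean once one notices that $(n,m)$-local embeddability is defined in terms of $\aadom{\cdot}$ and $\facts{\cdot}$ only, never touching the inactive part of the domain. The only point that requires a bit of care is the reduction via Lemma~\ref{lem:adom}: without it, one would have to separately handle subdatabases $E \preceq D'$ whose $\adom{E}$ includes elements of $\adom{D'} \setminus \adom{D}$, but after the reduction this issue disappears and the identity embedding works verbatim.
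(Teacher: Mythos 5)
Your proof is correct and follows essentially the same route as the paper's: take $D$ itself as the witness $D_E$ for every small subdatabase $E \preceq D'$ and observe that the identity map works because $\facts{D} = \facts{D'}$. The detour through Lemma~\ref{lem:adom} is harmless but not needed, since the embeddability condition only requires $E \subseteq D_E$ (containment of facts), which already follows from $E \preceq D'$.
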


\begin{proof}
	Assume that $C$ is a collection of databases over $\ins{S}$ that is $(n,m)$-local for $n,m \geq 0$, and let $D$ be an $\ins{S}$-database in $C$.
	Consider an $\ins{S}$-database $D'$ such that $\facts{D} = \facts{D'}$. We need to show that $D' \in C$.
	This is done by showing that $C$ is $(n,m)$-locally embeddable in $D'$, which implies that $D' \in C$ since, by hypothesis, $C$ is $(n,m)$-local.
	Consider an $\ins{S}$-database $D'' \preceq D'$ with $|\aadom{D''}| \leq n$. Since $\facts{D} =\facts{D'}$, it is clear that $D'' \subseteq D$. Since $D \in C$, it suffices to show that, for every $E$ in the $m$-neighbourhood of $D''$ in $D$, there is a function $h_{E} : \aadom{E} \ra \aadom{D'}$, which is the identity on $\aadom{D''}$, such that $h_{E}(\facts{E}) \subseteq \facts{D'}$. Since $E \subseteq D$, which means that $\facts{E} \subseteq \facts{D}$, we get that $\facts{E} \subseteq \facts{D'}$. Therefore, $h_{E}$ is simply the identity on $\aadom{E}$.
\end{proof}

\subsection{The Characterization for TGDs and EGDs}

We are now ready to provide the characterization of interest, which was originally established for arbitrary relational structures in~\cite{DBLP:conf/pods/ConsoleKP21}, and here is presented for databases (i.e., finite structures).

\begin{theorem}\label{the:tgd-egd-characterization}
	Consider a collection $C$ of databases, and integers $n,m \geq 0$. The following statements are equivalent:
	\begin{enumerate}
		\item $C$ is finitely axiomatizable by $(n,m)$-tgds and $n$-egds.
		\item $C$ is $1$-critical, $(n,m)$-local, and closed under direct products.
	\end{enumerate}
\end{theorem}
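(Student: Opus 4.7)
My plan is to follow the template of Theorem~\ref{the:ftgd+egd-intersections} as closely as possible, substituting modularity with $(n,m)$-locality and closure under intersections with closure under direct products. For the direction $(1)\Rightarrow(2)$, I verify the three properties in turn. $1$-criticality is immediate: on the unique $1$-critical database every atom $R(c,\ldots,c)$ holds, so every tgd and egd is satisfied by sending all universally and existentially quantified variables to $c$. The $(n,m)$-locality property is exactly the content of Lemma~\ref{lem:locality}. Closure under direct products is the classical observation that tgds and egds are preserved under products: for a tgd $\phi\to\exists\bar z\,\psi$, given witnesses $h_i$ (and extensions $g_i$) in $D_i$ for $i\in\{1,2\}$, the componentwise map $h=(h_1,h_2)$ satisfies $\phi$ in $D_1\otimes D_2$ and extends to the componentwise $g=(g_1,g_2)$ satisfying $\psi$; egds are preserved because two tuples in a product are equal iff they are equal in every coordinate.

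For $(2)\Rightarrow(1)$, I introduce an $(n,m)$-\emph{disjunctive dependency} as a sentence of the form $\forall\bar x\bigl(\phi(\bar x)\to\bigvee_{i=1}^{k}\alpha_i\bigr)$, where $\phi$ is a positive conjunction of atoms with $|\bar x|\le n$ and each $\alpha_i$ is either an equality $x_p = x_q$ or an existentially quantified positive conjunction $\exists\bar z_i\,\psi_i(\bar x,\bar z_i)$ with $|\bar z_i|\le m$. Let $\dep^{\vee}$ denote the set of all such sentences satisfied by every database in $C$; modulo logical equivalence this set is finite. I first claim that $D\in C$ iff $D\models\dep^{\vee}$. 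The forward direction is by definition. For the converse, suppose $D\not\in C$; by $(n,m)$-locality and Lemma~\ref{lem:adom} there is $E\preceq D$ with $\aadom{E}=\adom{E}$ and $|\aadom{E}|\le n$ such that, for every $D_E\in C$ with $E\subseteq D_E$, some member of the $m$-neighbourhood of $E$ in $D_E$ fails to admit a homomorphism to $D$ that is the identity on $\aadom{E}$. Define $\mathcal{F}$ to be the finite set of isomorphism types (with $\aadom{E}$ as a distinguished subset) of databases $F$ satisfying $\facts{E}\subseteq\facts{F}$, $\aadom{F}\supseteq\aadom{E}$, $|\aadom{F}\setminus\aadom{E}|\le m$, and which admit no homomorphism to $D$ that is the identity on $\aadom{E}$; and let $\phi_E(\bar x_E)$ be the conjunction of atoms of $E$ written with variables $\bar x_E$ in bijection with $\aadom{E}$. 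Take
\[
\delta_E \;=\; \forall \bar x_E\Bigl(\phi_E(\bar x_E)\to \bigvee_{p\neq q} x_p = x_q \;\vee\; \bigvee_{F\in\mathcal{F}}\exists\bar y_F\,\psi_F(\bar x_E,\bar y_F)\Bigr),
\]
where $\psi_F$ is the conjunction of atoms of $F$. The identity embedding of $E$ into $D$ falsifies $\delta_E$ (distinct constants kill the equality disjuncts, and no $F\in\mathcal{F}$ admits a homomorphism to $D$ fixing $\aadom{E}$); and for every $D''\in C$, every injective satisfier $h$ of $\phi_E$ in $D''$ can be transported by closure under isomorphisms to a relabeled copy $\hat D''\in C$ containing $E$, whereupon the ``bad $E$'' property produces some $F\in\mathcal{F}$ realised in the $m$-neighbourhood of $E$ in $\hat D''$, thereby supplying an $F$-disjunct witness at $h(\bar x_E)$ back in $D''$. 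Hence $\delta_E\in\dep^{\vee}$ and $D\not\models\delta_E$.

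The second ingredient is a McKinsey-style collapse: for every $\delta=\forall\bar x(\phi(\bar x)\to\bigvee_{i=1}^{k}\alpha_i)\in\dep^{\vee}$, some individual implication $\forall\bar x(\phi(\bar x)\to\alpha_j)$ already belongs to $\dep^{\vee}$. If not, pick for each $j\in[k]$ a database $D_j\in C$ and a tuple $\bar b_j$ with $D_j\models\phi(\bar b_j)\wedge\neg\alpha_j(\bar b_j)$, form $D=D_1\otimes\cdots\otimes D_k\in C$ by closure under direct products, and let $\bar b$ be the tuple in $D$ whose $p$-th coordinate is $(b_{1,p},\ldots,b_{k,p})$. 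Since positive conjunctions pass into products, $\phi(\bar b)$ holds in $D$; applying $D\models\delta$ then yields some $\alpha_{j_0}(\bar b)$ in $D$. If $\alpha_{j_0}$ is an equality $x_p=x_q$, equality in the product forces $b_{j_0,p}=b_{j_0,q}$ in $D_{j_0}$, contradicting the choice of $\bar b_{j_0}$; if $\alpha_{j_0}=\exists\bar z\,\psi$, projecting a witness for $\bar z$ in $D$ to its $j_0$-th coordinate gives a witness in $D_{j_0}$, again a contradiction. Letting $\dep$ be the set of $(n,m)$-tgds and $n$-egds in $\dep^{\vee}$ (a finite set), every member of $\dep^{\vee}$ is implied by a member of $\dep$, so $\dep$ finitely axiomatises $C$. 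I expect the principal technical hurdle to be the verification that every $D''\in C$ satisfies $\delta_E$: this requires a careful isomorphism transport of the non-local-embeddability assertion at $E\preceq D$ to the image of $E$ sitting inside $D''$, combined with the observation that realising some $F\in\mathcal{F}$ in the $m$-neighbourhood of that image indeed yields existential witnesses for the corresponding disjunct when the free variables $\bar x_E$ are instantiated by $h$.
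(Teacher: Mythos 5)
Your overall architecture is the same as the paper's: a finite set of bounded-size disjunctive sentences extracted from $C$ via locality, followed by a McKinsey-style elimination of disjunctions using closure under direct products. Your sentence $\delta_E$ is, up to rewriting, the negation of the existential closure of the paper's ``$m$-diagram of $E$ relative to $D$'' put into implicational form (the paper negates \emph{all} small patterns anchored at $\adom{E}$ that fail in $D$, whereas you only list the forbidden patterns containing $E$; this restriction is harmless, and running the first lemma contrapositively rather than via local embeddability is only a presentational difference). Your collapse lemma and its proof coincide with the paper's Lemma~\ref{lem:tgd-egd-dc-characterization-lemma-2}.

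There is, however, one genuine gap: you never use $1$-criticality in the direction $(2)\Rightarrow(1)$, yet the implication is false without it. Concretely, let $C$ be the collection of all databases with no facts. This $C$ is $(n,0)$-local for $n$ the maximum arity of $\ins{S}$ and closed under direct products, but it is not axiomatizable by tgds and egds, since the $1$-critical database satisfies every tgd and egd. Your proof breaks exactly at the corner case you leave implicit: when $|\aadom{E}|\leq 1$ there are no equality disjuncts, and if no $D_E\in C$ contains $E$ then your $\mathcal{F}$ may be empty, so $\delta_E$ degenerates to $\forall \bar x_E\,(\phi_E(\bar x_E)\ra \mathsf{false})$. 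Either such sentences are excluded from your dependency class, in which case your first lemma fails (for the $C$ above, the $1$-critical database satisfies every disjunctive dependency with a non-empty conclusion), or they are allowed, in which case the collapse lemma has no disjunct to select and the final assembly fails. The fix is exactly where the paper invokes $1$-criticality: the $1$-critical database belongs to $C$, hence every $E$ with $|\aadom{E}|\leq 1$ is contained in some member of $C$ (so the degenerate $\delta_E$ never arises), and no sentence of the form $\forall \bar x\,(\phi(\bar x)\ra \mathsf{false})$ with non-empty $\phi$ can be satisfied by all of $C$, so such sentences never survive into $\dep^{\vee}$. You should state and use this explicitly.
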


Clearly, the above result implies a characterization for arbitrary tgds and egds:

\begin{corollary} \label{cor:tgd+egd}
	Consider a collection $C$ of databases. The following are equivalent:
	\begin{enumerate}
		\item $C$ is finitely axiomatizable by tgds and egds.
		\item $C$ is $1$-critical, local, and closed under direct products.
	\end{enumerate}
\end{corollary}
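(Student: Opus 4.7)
For direction $(1) \Rightarrow (2)$, each property follows by a standard preservation argument. A $1$-critical database satisfies every tgd (extend any body-to-constant homomorphism by mapping all existentials to the unique element) and every egd (both sides collapse to that element), so $C$ must contain it. The property of $(n,m)$-locality is precisely Lemma~\ref{lem:locality}. Closure under direct products is the familiar fact that Horn-like sentences---in particular tgds and egds---are preserved under products: witnesses for conjunctions of body atoms combine coordinatewise, and existential heads combine by pairing; for egds the required equality in $D \otimes D'$ reduces to the equalities in each factor via projection.

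For $(2) \Rightarrow (1)$, I follow the blueprint of the proof of Theorem~\ref{the:ftgd+egd-intersections}, suitably extended to accommodate existentially quantified heads. Introduce an \emph{$(n,m)$-disjunctive dependency}: a sentence $\forall \bar x\,(\phi(\bar x) \to \bigvee_{i=1}^k \alpha_i)$ with $|\bar x| \leq n$, where each $\alpha_i$ is either an equality $y = z$ with $y,z \in \bar x$, or an existential conjunction $\exists \bar z_i\,\psi_i(\bar u_i,\bar z_i)$ with $\bar u_i \subseteq \bar x$, $|\bar z_i| \leq m$, and $\psi_i$ a conjunction of atoms. Let $\Sigma^\vee$ be the set of all such dependencies that are satisfied by every $D \in C$; since $\ins{S}$ is finite and the variable count is bounded by $n + m$, $\Sigma^\vee$ is finite up to logical equivalence.

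The core claim is that, for every $\ins{S}$-database $D$, we have $D \in C$ iff $D \models \Sigma^\vee$. The forward direction is immediate. For the reverse, assume $D \models \Sigma^\vee$ and verify that $C$ is $(n,m)$-locally embeddable in $D$; then $(n,m)$-locality yields $D \in C$. By Lemma~\ref{lem:adom}, fix $E \preceq D$ with $\aadom{E} = \adom{E}$ and $|\aadom{E}| \leq n$, enumerate $\aadom{E}$ by an injective tuple $\bar c$, and let $\phi_E(\bar x)$ be the positive diagram of $E$. If no witness $D_E \in C$ exists, call a pattern $\psi(\bar x, \bar z)$ with $|\bar z|\leq m$ \emph{bad} if $D \not\models \exists \bar z\, \psi(\bar c, \bar z)$, enumerate the (finitely many) bad patterns as $\psi_1,\ldots,\psi_K$, and consider
\[
\delta \;:\; \phi_E(\bar x) \;\to\; \bigvee_{i=1}^{K} \exists \bar z_i\, \psi_i(\bar x, \bar z_i) \;\lor\; \bigvee_{1 \leq i < j \leq |\bar x|} x_i = x_j.
\]
One checks $\delta \in \Sigma^\vee$: for any $D' \in C$ with $D' \models \phi_E(\bar a)$, either $\bar a$ is non-injective (some equality disjunct holds), or $\bar a$ witnesses an isomorphic copy of $E$ in $D'$, which by the assumed failure has an $m$-neighbour realizing some bad pattern (some existential disjunct holds). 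But then $D \models \delta$ evaluated at $\bar c$ forces a disjunct to be satisfied, contradicting the injectivity of $\bar c$ together with the definition of bad pattern.

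Finally, I apply McKinsey's method of eliminating disjunctions, now through closure under direct products. Suppose $\delta = \forall \bar x (\phi \to \bigvee_i \alpha_i)$ lies in $\Sigma^\vee$ but no single-disjunct version does; for each $j$ pick $D_j \in C$ with a common tuple $\bar b$ (obtained by closure under isomorphisms) such that $D_j \models \phi(\bar b) \land \neg \alpha_j(\bar b)$. The diagonal tuple in $D_1 \otimes \cdots \otimes D_k \in C$ satisfies $\phi$ and hence some $\alpha_j$; projecting a witnessing tuple (for an existential $\alpha_j$, or the equality itself) back to $D_j$ gives $D_j \models \alpha_j(\bar b)$, a contradiction. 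Hence each $\delta \in \Sigma^\vee$ is equivalent on $C$ to a single $(n,m)$-tgd or $n$-egd from $\Sigma^\vee$, and the finite subset $\Sigma$ of these non-disjunctive members is the required axiomatization of $C$. The main obstacle is the design of $\delta$: one must carefully handle non-injective tuples $\bar a$ via the equality disjuncts so that $\delta$ lands in the correct syntactic class and lies in $\Sigma^\vee$, yet still fails at $\bar c$ in $D$; secondarily, one must verify that McKinsey's trick still closes when disjuncts are existential conjunctions rather than plain atoms, which it does precisely because existential witnesses in a product project coordinatewise to witnesses in each factor.
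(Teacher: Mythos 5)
Your proposal is correct and follows essentially the same route as the paper: the paper obtains this corollary immediately from the parametrized Theorem~\ref{the:tgd-egd-characterization}, whose proof of the nontrivial direction uses exactly your ingredients --- the finite set $\Sigma^{\vee}$ of bounded existential disjunctive dependencies, a relative-diagram/``bad pattern'' sentence (with equality disjuncts for non-injective tuples) to establish $(n,m)$-local embeddability, and McKinsey's elimination of disjunctions via closure under direct products. The only point you leave implicit is where $1$-criticality enters direction $(2)\Rightarrow(1)$: it is needed to guarantee that no member of $\Sigma^{\vee}$ has an empty conclusion, so that the elimination step always returns a genuine tgd or egd, exactly as in the paper's final assembly.
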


The direction $(1) \Rightarrow (2)$ of Theorem~\ref{the:tgd-egd-characterization} is easy and we leave it as an exercise to the reader. We proceed to discuss the non-trivial direction $(2) \Rightarrow (1)$. To this end, we need to introduce a couple of auxiliary notions, namely existential disjunctive dependencies and relative diagrams of databases.

\medskip

\noindent \textbf{Existential Disjunctive Dependencies.} Existential disjunctive dependencies generalize disjunctive dependencies, already used in the proof of Theorem~\ref{the:ftgd+egd-intersections}, with existential quantification in the right-hand side of the implication. 
Moreover, the hypothesis and the conclusion of the implication can be empty.
More precisely, an {\em existential disjunctive dependency} (edd) $\delta$ over a schema $\ins{S}$ is a constant-free sentence
\[
\forall \bar x \left(\phi(\bar x)\ \ra\ \bigvee_{i=1}^{k}
\psi_i(\bar x_i)\right),
\]
where $\bar x$ is a tuple of variables of $\ins{V}$, the expression $\phi(\bar x)$ is a (possibly empty) conjunction of atoms over $\ins{S}$, and, for each $i \in [k]$, $\bar x_i \subseteq \bar x$ and the expression $\psi(\bar x_i)$ is either an equality formula $y=z$ with $\bar x_i = \{y,z\}$, or a constant-free formula $\exists \bar y_i \, \chi_i(\bar x_i,\bar y_i)$ with $\bar y_i$ being a tuple of variables from $\ins{V} \setminus \bar x$ and $\chi_i(\bar x_i,\bar y_i)$ is a (possibly empty) conjunction of atoms over $\ins{S}$.
When the conclusion of $\delta$ is empty (i.e., there are no disjuncts), $\delta$ is essentially the sentence $\forall \bar x \left(\phi(\bar x) \ra \mathsf{false}\right)$. Now, when both the hypothesis and the conclusion of $\delta$ are empty, then $\delta$ is essentially the truth value $\mathsf{false}$, i.e., a contradiction.
Assuming that the conclusion of $\delta$ is non-empty, a database $D$ \emph{satisfies} $\delta$ if, whenever there exists a function $h : \bar x \ra \adom{D}$ such that $h(\phi(\bar x)) \subseteq \facts{D}$, then there is $i \in [k]$ such that, if $\psi_i(\bar y_i)$ is $y=z$, then $h(y) = h(z)$; otherwise, if $\psi_i(\bar y_i)$ is of the form $\exists \bar y_i \, \chi_i(\bar x_i,\bar y_i)$, then there is an extension $h'$ of $h$ such that $h'(\chi_i(\bar x_i,\bar y_i)) \subseteq \facts{D}$. 
In case the conclusion of $\delta$ is empty and $\delta$ is not a contradiction, $D$ \emph{satisfies} $\delta$ if there is no function $h : \bar x \ra \adom{D}$ such that $h(\phi(\bar x)) \subseteq \facts{D}$.
We write $D \models \delta$ for the fact that $D$ satisfies $\delta$. The database $D$ \emph{satisfies a set} $\dep$ of edds, written $D \models \dep$ ($D$ is a {\em model} of $\dep$), if $D \models \delta$ for each $\delta \in \dep$.

\medskip
\noindent \textbf{Relative Diagram of a Database.} Consider an integer $\ell \geq 0$, an $\ins{S}$-database $E$, and an $\ins{S}$-database $D$ such that $E \preceq D$.
We are going to define the notion of the \emph{$\ell$-diagram of $E$ relative to $D$}, which can be regarded as a refinement of the standard notion of diagram of a database, already used in the proof of Theorem~\ref{the:ftgd+egd-intersections}.
Let $A_{E,\ell}$ be the set of all atomic formulas of the form $R(\bar u)$ that can be formed using predicates from $\ins{S}$, constants from $\adom{E}$, and $\ell$ distinct variables $y_1,\ldots,y_\ell$ from $\ins{V}$, i.e., $R \in \ins{S}$ and $\bar u \in (\adom{E} \cup \{y_1,\ldots,y_\ell\})^{\ar{R}}$.  
Let $C_{E,\ell}$ be the set of all conjunctions of atomic formulas from $A_{E,\ell}$. Note that both $A_{E,\ell}$ and
$C_{E,\ell}$ are finite sets because $\adom{E}$ is finite.
Given a formula $\gamma(y_1,\ldots,y_\ell) \in C_{E,\ell}$, we can naturally talk about the satisfaction of the sentence $\exists y_1 \cdots \exists y_\ell\, \gamma( y_1,\ldots,y_\ell)$ by a database $D'$, in which case we simply write $D' \models \exists y_1 \cdots \exists y_\ell \, \gamma(y_1,\ldots,y_\ell)$.
The {\em $\ell$-diagram of $E$ relative to $D$}, denoted $\Delta_{E,\ell}^{D}$, is the first-order sentence
{\small 
\[
\bigwedge_{\alpha \in \facts{E}} \alpha\ \wedge\ \bigwedge_{\substack{c,d \in \adom{E}, \\ c \neq d}} \neg (c=d)\ \wedge\ \bigwedge_{\substack{\gamma(y_1,\ldots,y_\ell) \in C_{E,\ell}, \\ D \not\models \exists y_1 \cdots \exists y_\ell \, \gamma(y_1,\ldots,y_\ell)}} \neg \left(\exists y_1 \cdots \exists y_\ell \, \gamma(y_1,\ldots,y_\ell)\right).
\]
}
In fact, we are interested in the first-order formula $\Phi_{E,\ell}^{D}(\bar x)$ obtained from the first-order sentence $\Delta_{E,\ell}^{D}$ by replacing each constant element $c \in \adom{E}$ with a new variable $x_c \in \ins{V} \setminus \{y_1,\ldots, y_\ell\}$.
As was the case with the formulas of $C_{E,\ell}$, we can naturally talk about the satisfaction of $\exists \bar x \, \Phi_{E,\ell}^{D}(\bar x)$ by a database $D'$, in which case we simply write $D' \models \exists \bar x \, \Phi_{E,\ell}^{D}(\bar x)$. It is straightforward to verify that:

\begin{lemma}\label{lem:relative-diagram}
	For every integer $\ell \geq 0$, $\ins{S}$-database $E$, and $\ins{S}$-database $D$ with $E \preceq D$, it holds that $D \models \exists \bar x \, \Phi_{E,\ell}^{D}(\bar x)$.
\end{lemma}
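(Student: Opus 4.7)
The plan is to exhibit a concrete witnessing assignment for the existentially quantified variables in $\exists \bar x \, \Phi_{E,\ell}^{D}(\bar x)$ and then verify that each conjunct of the resulting formula holds in $D$. Since $\Phi_{E,\ell}^{D}(\bar x)$ is obtained from $\Delta_{E,\ell}^{D}$ by replacing each constant $c \in \adom{E}$ with a fresh variable $x_c$, the natural candidate is the identity-like assignment that sends each variable $x_c$ back to the element $c$. Under this assignment, verifying $D \models \exists \bar x \, \Phi_{E,\ell}^{D}(\bar x)$ reduces to showing that $D$ satisfies the sentence obtained from $\Delta_{E,\ell}^{D}$ by interpreting its constants in the obvious way.

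The verification splits into three routine checks, one for each type of conjunct in $\Delta_{E,\ell}^{D}$. First, for each atom $\alpha \in \facts{E}$, since $E \preceq D$ we have $\facts{E} \subseteq \facts{D}$, so $D \models \alpha$. Second, for each pair of distinct elements $c, d \in \adom{E}$, the inequality $c \neq d$ holds by choice, so $D \models \neg(c = d)$. Third, for each $\gamma(y_1,\ldots,y_\ell) \in C_{E,\ell}$ with $D \not\models \exists y_1 \cdots \exists y_\ell \, \gamma(y_1,\ldots,y_\ell)$, the corresponding conjunct $\neg(\exists y_1 \cdots \exists y_\ell \, \gamma(y_1,\ldots,y_\ell))$ holds in $D$ tautologically, since the conjunction selector of $\Delta_{E,\ell}^{D}$ ranges precisely over those $\gamma$'s failing in $D$.

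There is essentially no obstacle: the statement is definitional in nature, and the only subtlety is bookkeeping the correspondence between the constants $c \in \adom{E}$ appearing in $\Delta_{E,\ell}^{D}$ and the variables $x_c$ appearing in $\Phi_{E,\ell}^{D}(\bar x)$. Once the witnessing map $x_c \mapsto c$ is fixed, each of the three conjunct types is discharged immediately by the definition of $\preceq$, the distinctness of elements in $\adom{E}$, and the very criterion that determined which negated existentials were included in $\Delta_{E,\ell}^{D}$.
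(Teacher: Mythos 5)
Your proof is correct and matches the intended argument: the paper states the lemma as ``straightforward to verify'' without giving details, and the verification you spell out --- witnessing each $x_c$ by $c$ itself (which lies in $\adom{D}$ since $E \preceq D$) and checking the three conjunct types via $\facts{E} \subseteq \facts{D}$, distinctness of elements of $\adom{E}$, and the defining selection criterion for the negated existentials --- is exactly the routine check being alluded to.
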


Let $\class{E}_{n,m}$, for some integers $n,m \geq 0$, be the set of all edds over $\ins{S}$ of the form $\forall \bar x \left(\phi(\bar x) \ra \bigvee_{i=1}^{k} \psi_i(\bar x_i)\right)$ such that $\bar x$ consists of at most $n$ distinct variables, and, for each $i \in [k]$, the formula $\psi_i(\bar x_i)$ mentions at most $n+m$ distinct variables. The latter means that, if $\psi_i(\bar x_i)$ is a formula of the form $\exists \bar y_i \, \chi_i(\bar x_i,\bar y_i)$ (i.e., it is not an equality expression), then $\bar y_i$ consists of at most $m$ distinct variables. Note that $\class{E}_{n,m}$ is a finite set (up to logical equivalence) since $\ins{S}$ is finite and the number of variables in each element of $\class{E}_{n,m}$ is finite. We can show the following auxiliary lemma.

\begin{lemma}
	\label{lem:tgd-egd-dc-characterization-lemma-1-B}
	Let $n,m \geq 0$ and assume that  $D,E$ are two $\ins{S}$-databases such that $\aadom{E} = \adom{E}$, $|\aadom{E}|\leq n$, and $E \preceq D$. It holds that there exists a sentence $\delta \in \class{E}_{n,m}$ such that $\delta \equiv \neg \exists \bar x \, \Phi_{E,m}^{D}(\bar x)$.
\end{lemma}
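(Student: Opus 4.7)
The plan is to construct $\delta$ explicitly by a De Morgan rearrangement of $\neg \exists \bar x \, \Phi_{E,m}^{D}(\bar x)$, and then verify that the result lies in $\class{E}_{n,m}$. First, I would unpack $\Phi_{E,m}^{D}(\bar x)$ into three groups of conjuncts, corresponding to the three pieces of $\Delta_{E,m}^{D}$: the positive atoms $\phi(\bar x) := \{R(\bar v_\alpha) \mid \alpha \in \facts{E}\}$ obtained by renaming each $c \in \adom{E}$ to $x_c$; the inequality conjuncts $\neg(x_c = x_d)$ for distinct $c, d \in \adom{E}$; and, for every $\gamma(y_1, \ldots, y_m) \in C_{E,m}$ with $D \not\models \exists y_1 \cdots \exists y_m \, \gamma$, the negated existential $\neg \exists y_1 \cdots \exists y_m \, \gamma'(\bar x, y_1, \ldots, y_m)$, where $\gamma'$ is obtained from $\gamma$ by the same renaming $c \mapsto x_c$.

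Next, I would rewrite $\neg \exists \bar x \, \Phi_{E,m}^{D}(\bar x)$ as $\forall \bar x \, \neg \Phi_{E,m}^{D}(\bar x)$ and keep $\phi(\bar x)$ as the antecedent of an implication while negating everything else. Applying De Morgan to the remaining conjunction of negations produces the equivalent sentence
\[
\delta \;:=\; \forall \bar x \, \left( \phi(\bar x) \;\ra\; \bigvee_{\substack{c,d \in \adom{E}, \\ c \neq d}} (x_c = x_d) \;\vee\; \bigvee_{\substack{\gamma \in C_{E,m}, \\ D \not\models \exists \bar y \, \gamma}} \exists y_1 \cdots \exists y_m \, \gamma'(\bar x, y_1, \ldots, y_m) \right).
\]
By construction, $\delta \equiv \neg \exists \bar x \, \Phi_{E,m}^{D}(\bar x)$, and $\delta$ is constant-free since every occurrence of an element of $\adom{E}$ has been replaced by an $x_c$ variable.

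Then I would check that $\delta$ lies in $\class{E}_{n,m}$. The universal tuple $\bar x = (x_c)_{c \in \adom{E}}$ has length $|\adom{E}| = |\aadom{E}| \leq n$, using the hypothesis $\aadom{E} = \adom{E}$. Each equality disjunct uses exactly two variables from $\bar x$, and each existential disjunct binds the $m$ variables $y_1, \ldots, y_m$, so overall mentions at most $|\adom{E}| + m \leq n + m$ distinct variables, as required. Moreover, $\phi(\bar x)$ is a conjunction of atoms over $\ins{S}$, hence has the correct shape for the body of an edd.

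Finally, I would dispatch the degenerate cases using the conventions of the paper. If $\facts{E} = \emptyset$ then, by the assumption $\aadom{E} = \adom{E}$, we have $\adom{E} = \emptyset$, so $\phi$ is empty and there are no inequality disjuncts; this is still covered because the edd definition permits an empty body. If in addition every $\gamma \in C_{E,m}$ is satisfied by $D$, then the head disjunction is also empty, and $\delta$ reduces to the truth value $\mathsf{false}$, which matches the prescribed reading of an edd with empty body and empty head and which coincides semantically with $\neg \exists \bar x \, \Phi_{E,m}^{D}(\bar x)$ in that case (since $\Phi_{E,m}^{D}$ is the empty conjunction, i.e. $\top$). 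The only real obstacle is this accounting: ensuring that each conjunct of $\Phi_{E,m}^{D}$ migrates into the correct slot of the edd schema (antecedent versus equality disjunct versus existential disjunct) and that variable counts and the no-constants requirement are simultaneously respected; there is no semantic subtlety beyond propositional De Morgan.
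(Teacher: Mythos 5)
Your proposal is correct and follows essentially the same route as the paper's proof: negate the existentially quantified relative diagram, peel off the positive atoms as the antecedent, De Morgan the remaining negated conjuncts into equality and existential disjuncts, and use $\aadom{E}=\adom{E}$ and $|\aadom{E}|\leq n$ to verify membership in $\class{E}_{n,m}$. The paper merely makes the degenerate-case bookkeeping slightly more explicit (in particular the case where $\facts{E}\neq\emptyset$ but the head disjunction is empty, yielding $\forall \bar x\,(\phi(\bar x)\ra\mathsf{false})$), which your general construction also covers under the stated conventions.
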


\begin{proof}
	Recall that the formula $\Phi_{E,m}^{D}(\bar x)$ is obtained from the $m$-diagram of $E$ relative to $D$ by renaming each $c \in \adom{E}$ to a new variable $x_c$; let $\rho$ be the renaming function, i.e., $\rho(c) = x_c$ for each $c \in \adom{E}$. Thus, $\Phi_{E,m}^{D}(\bar x)$ is of the form
	\[
	\underbrace{\bigwedge_{\alpha \in \facts{E}} \rho(\alpha)}_{\Psi_1}\ \wedge \underbrace{\bigwedge_{\substack{c,d \in \adom{E}, \\ c \neq d}} \neg (\rho(c)=\rho(d))\ \wedge \bigwedge_{\substack{\gamma(\bar y) \in C_{E,m}, \\ D \not\models \exists \bar y \, \gamma(\bar y)}} \neg \exists \bar y \, \rho(\gamma(\bar y))}_{\Psi_2}.
	\]
	Note that $\Psi_1$ and $\Psi_2$ might be empty (i.e., they have no conjuncts). In particular, $\Psi_1$ is empty if $E$ is empty, whereas $\Psi_2$ is empty if $D$ is $1$-critical, and thus, $D \models \exists \bar y \, \gamma(\bar y)$ for each $\gamma(\bar y) \in C_{E,m}$.
	It is clear that $\neg \exists \bar x \,\Phi_{E,m}^{D}(\bar x)$ is equivalent to the sentence
	\[
	\delta\ =\ \forall \bar x \left(\phi(\bar x)\ \ra\ \psi(\bar x)\right),
	\] 
	where
	\[
	\phi(\bar x)\ =\ \bigwedge_{\alpha \in \facts{E}} \rho(\alpha)
	\]
	and the shape of $\psi(\bar x)$ depends on whether $\Psi_1$ and $\Psi_2$ are empty or not. 
	If both are empty, then $\delta$ is the truth constant $\mathsf{false}$, i.e., a contradiction, and thus, $\delta \in \class{E}_{n,m}$.
	If $\Psi_1$ is non-empty and $\Psi_2$ is empty, then, for an $\ins{S}$-database $D'$, it holds that $D' \models \neg \exists \bar x \,\Phi_{E,m}^{D}(\bar x)$ if there is no function $h : \bar x \ra \adom{D'}$ such that $h(\phi(\bar x)) \subseteq \facts{D'}$. Hence, in this case, $\delta$ is the edd $\forall \bar x \, \left(\phi(\bar x) \ra \mathsf{false}\right)$. Since, by hypothesis, $|\aadom{E}| \leq n$, we get that $\phi(\bar x)$ mentions at most $n$ variables, and thus, $\delta \in \class{E}_{n,m}$.
	Now, if $\Psi_1$ is either empty or non-empty, and $\Psi_2$ is non-empty, then we have that
	\[
	\psi(\bar x)\ =\ \bigvee_{\substack{c,d \in \adom{E}, \\ c \neq d}} \rho(c) = \rho(d) \vee \bigvee_{\substack{\gamma(\bar y) \in C_{E,m}, \\ D \not\models \exists \bar y \, \gamma(\bar y)}} \exists \bar y \, \rho(\gamma(\bar y)),
	\]
	and hence, $\delta$ is an edd.	
	It remains to show that $\delta \in \mathsf{E}_{n,m}$. To this end, we need to show the following two statements: (i)  each variable in $\psi(\bar x)$ is either existentially quantified or appears in $\phi(\bar x)$, and (ii) $\delta$ mentions at most $n$ universally quantified variables and at most $m$ existentially quantified variables. Statement (i) is true since, by hypothesis, $\aadom{E} = \adom{E}$. Concerning statement (ii), $\delta$ has at most $n$ universally quantified variables since, by hypothesis, $|\aadom{E}| \leq n$, and $\delta$ has $m$ existentially quantified variables because so does the formula $\neg \Phi_{E,m}^D$ by the construction of $\Phi_{E,m}^D$.
\end{proof}

We can now discuss the non-trivial direction $(2) \Rightarrow (1)$ of Theorem~\ref{the:tgd-egd-characterization}. Assume that $C$ is a collection of databases over $\ins{S}$. The proof proceeds in two main steps:

\begin{enumerate}
	\item We construct a finite set $\dep^{\vee}$ of edds over $\ins{S}$, which mention at most $n$ universally and at most $m$ existentially quantified variables, and show that $\dep^{\vee}$ has the following property:  a database $D$ is in $C$ if and only if  $D$ satisfies  $\dep^{\vee}$. To this end, we exploit the fact that $C$ is $(n,m)$-local.
	
	\item We then show that there is a finite set $\dep^{\exists,=}$ of $(n,m)$-tgds and $n$-egds over $\ins{S}$ such that a database $D$ is in $C$ if and only if $D$ satisfies $\dep^{\exists,=}$;
 in fact, $\dep^{\exists,=}$ is the set of the tgds and egds occurring in $\dep^{\vee}$. To this end, we exploit the fact that $C$ is $1$-critical and closed under direct products.
\end{enumerate}
In what follows, we give the details for each of the above steps.

\bigskip
\noindent
\textbf{\underline{Step 1: The finite set $\dep^{\vee}$ of edds that axiomatizes $C$}}

\smallskip
\noindent Recall that $\class{E}_{n,m}$ is the set of all edds over $\ins{S}$ of the form $\forall \bar x \left(\phi(\bar x) \ra \bigvee_{i=1}^{k} \psi_i(\bar x_i)\right)$ such that $\bar x$ consists of at most $n$ distinct variables, and, for each $i \in [k]$, the formula $\psi_i(\bar x_i)$ mentions at most $n+m$ distinct variables. The latter means that, if $\psi_i(\bar x_i)$ is a formula of the form $\exists \bar y_i \, \chi_i(\bar x_i,\bar y_i)$ (i.e., it is not an equality expression), then $\bar y_i$ consists of at most $m$ distinct variables. Recall also that $\class{E}_{n,m}$ is finite (up to logical equivalence).
Define $\dep^{\vee}$ to be the set of all edds from $\class{E}_{n,m}$ that are satisfied by every database of $C$, that is,
\[
\dep^{\vee}\ =\ \left\{\delta \in \class{E}_{n,m} \mid \text{ for each } D \in C, \text{ it holds that } D \models \delta\right\}.
\]
It is clear that $\dep^{\vee}$ is finite (up to logical equivalence) since $\dep^{\vee} \subseteq \class{E}_{n,m}$. We will show that $C$ is precisely the set of databases that satisfy $\dep^{\vee}$.

\begin{lemma}\label{lem:tgd-egd-dc-characterization-lemma-1}
	For every $\ins{S}$-database $D$, we have that  $D \in C$ if and only if $D \models \dep^{\vee}$.
\end{lemma}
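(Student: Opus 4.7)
The forward direction is immediate from the definition of $\dep^{\vee}$: every database in $C$ satisfies every edd in $\dep^{\vee}$ by construction.

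For the non-trivial direction, assume $D \models \dep^{\vee}$. Since $C$ is $(n,m)$-local, it suffices to show that $C$ is $(n,m)$-locally embeddable in $D$. By Lemma~\ref{lem:adom}, we may restrict attention to subdatabases $E \preceq D$ such that $\aadom{E}=\adom{E}$ and $|\aadom{E}|\leq n$. Fix such an $E$; the task is to produce $D_E\in C$ with $E\subseteq D_E$ for which every $m$-neighbour of $E$ in $D_E$ embeds back into $D$ while fixing $\aadom{E}$.

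The plan is to extract $D_E$ from the relative diagram. By Lemma~\ref{lem:relative-diagram}, $D \models \exists \bar x\, \Phi_{E,m}^{D}(\bar x)$, so $D \not\models \neg\exists \bar x\, \Phi_{E,m}^{D}(\bar x)$. By Lemma~\ref{lem:tgd-egd-dc-characterization-lemma-1-B}, this negated sentence is logically equivalent to some $\delta \in \class{E}_{n,m}$, so $D\not\models\delta$. Since $D\models\dep^{\vee}$, we conclude $\delta\notin\dep^{\vee}$, and hence, by the very definition of $\dep^{\vee}$, there exists a database $D_E\in C$ with $D_E\not\models\delta$, i.e., $D_E\models\exists\bar x\,\Phi_{E,m}^{D}(\bar x)$. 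The positive conjuncts of $\Phi_{E,m}^{D}$ together with the inequalities between the $\bar x$-variables provide an injective witnessing assignment from $\adom{E}$ into $\adom{D_E}$ that preserves all facts of $E$; applying the isomorphism closure of $C$ to $D_E$ along this assignment, we may assume $E\subseteq D_E$.

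It remains to verify the embedding condition for the $m$-neighbourhood. Given $D'$ in the $m$-neighbourhood of $E$ in $D_E$, list the (at most $m$) elements of $\aadom{D'}\setminus\aadom{E}$ as $c_1,\ldots,c_s$, and let $\gamma(y_1,\ldots,y_s)\in C_{E,m}$ be the conjunction obtained by writing each fact of $D'$ using constants from $\adom{E}$ and variables $y_i$ for $c_i$. Then $D_E\models\exists \bar y\,\gamma(\bar y)$ (witnessed by the $c_i$'s). By construction of $\Phi_{E,m}^{D}$, every $\gamma'\in C_{E,m}$ not satisfied existentially in $D$ contributes a negated conjunct that $D_E$ respects; contrapositively, since $D_E$ satisfies $\exists\bar y\,\gamma(\bar y)$, we must have $D\models\exists\bar y\,\gamma(\bar y)$. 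A witness to this existential in $D$, combined with the identity on $\aadom{E}$, yields the required homomorphism $h:\aadom{D'}\to\aadom{D}$ with $h(\facts{D'})\subseteq\facts{D}$.

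The main obstacle is bookkeeping around the relative diagram: one must see that the negated existentials built into $\Phi_{E,m}^{D}$ are \emph{exactly} what is needed to transfer existential witnesses from $D_E$ back into $D$ while preserving $E$, and that this correspondence is faithful precisely because the $m$-neighbourhood introduces at most $m$ fresh active-domain elements, matching the bound built into $C_{E,m}$. Once this correspondence is nailed down, the rest follows from locality and the definitions.
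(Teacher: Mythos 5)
Your proof is correct and follows essentially the same route as the paper's: both directions use the definition of $\dep^{\vee}$, Lemma~\ref{lem:adom}, Lemmas~\ref{lem:relative-diagram} and~\ref{lem:tgd-egd-dc-characterization-lemma-1-B} to obtain $D_E \in C$ satisfying $\exists \bar x\,\Phi_{E,m}^{D}(\bar x)$, and the negated existential conjuncts of the relative diagram to map $m$-neighbours of $E$ back into $D$. The only cosmetic differences are that you argue the neighbourhood step contrapositively rather than by contradiction and build $\gamma$ from all facts of $D'$ rather than from $\facts{D'}\setminus\facts{E}$; neither affects correctness.
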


\begin{proof}
	The $(\Rightarrow)$ direction follows from the definition of $\dep^{\vee}$.
	We proceed with the $(\Leftarrow)$ direction. We have to show that if $D \models \dep^{\vee}$, then $D \in  C$. We will show that $C$ is $(n,m)$-locally embeddable in $D$, which, since $C$ is $(n,m)$-local, will imply that $D \in C$.
	By Lemma \ref{lem:adom}, to show that $C$ is $(n,m)$-locally embeddable in $D$, we have to show that, for every $E \preceq D$ with $\aadom{E} = \adom{E}$ and  $|\aadom{E}| \leq n$, there is a database $D_E \in C$ such that $E \subseteq D_E$, and for every $D'$ in the $m$-neighbourhood of $E$ in $D_E$, there is a function $h : \aadom{D'} \ra \aadom{D}$ such that $h$ is the identity on $\aadom{E}$ and $h(\facts{D'}) \subseteq \facts{D}$.

	Let $E$ be a database such that $E \preceq D$, $\aadom{E} = \adom{E}$, and $|\aadom{E}| \leq n$. By Lemma~\ref{lem:tgd-egd-dc-characterization-lemma-1-B}, the sentence $\neg \exists \bar x \, \Phi^D_{E,m}(\bar x)$ is logically equivalent to a sentence $\delta \in \class{E}_{n,m}$. We claim that $\delta \not \in \dep^{\vee}$. Indeed, otherwise, we would have that $D \models \delta$, which is a contradiction, since, by Lemma \ref{lem:relative-diagram}, $D \models \exists \bar x \, \Phi^D_{E,m}(\bar x)$. Since $\delta \not \in \dep^{\vee}$, there must exist a database $D_E \in C$ such that $D_E \not \models \delta$, which means that $D_E \models \exists \bar x \, \Phi^D_{E,m}(\bar x)$.
	Thus, there exists a database $E' \subseteq D_E$ such that $E \simeq E'$, i.e., $E$ and $E'$ are isomorphic databases. Therefore, and without loss of generality, we can assume that $E \subseteq D_E$.
	To show that $C$ is $(n,m)$-locally embeddable in $D$, it remains to show that for every database $D'$ in the $m$-neighbourhood of $E$ in $D_E$, there exists a function $h : \aadom{D'} \ra \aadom{D}$ such that $h$ is the identity on $\aadom{E}$ and $h(\facts{D'}) \subseteq \facts{D}$.
	Towards a contradiction, assume that there exists $D'$ in the $m$-neighbourhood of $E$ in $D_E$ for which there is no function $h : \aadom{D'} \ra \aadom{D}$ that is the identity on $\aadom{E}$ with $h(\facts{D'}) \subseteq \facts{D}$.
	Let $F$ be the $\ins{S}$-database defined as the difference between $D'$ and $E$, i.e., $F$ is such that $\facts{F} = \facts{D'} \setminus \facts{E}$, while $\adom{F}$ consists of all the constants occurring in $\facts{D'} \setminus \facts{E}$, and hence, $\adom{F} = \aadom{F}$.
	Clearly, there is no function $h : \aadom{F} \ra \aadom{D}$ that is the identity on $\aadom{E}$ and $h(\facts{F} \subseteq D$.
	Observe that $|\aadom{F} \setminus \aadom{E}| \leq m$; we assume that $\aadom{F} \setminus \aadom{E} = \{d_1,\ldots,d_{m'}\}$ for $m' \leq m$.
	Let $\gamma(\bar y)$ be the formula obtained from $\bigwedge_{\alpha \in \facts{F}} \alpha$ after renaming each constant $d_i$ to the variable $y_i$; thus, $\bar y = y_1,\ldots,y_{m'}$.
	Since there is no function $h : \aadom{F} \ra \aadom{D}$ that is the identity on $\aadom{E}$ and $h(\facts{F} \subseteq D$, we can conclude that $D \not\models \exists \bar y \, \gamma(\bar y)$. Observe now that, by construction, $\neg \exists \bar y \, \gamma(\bar y)$ is a conjunct of $\Delta_{E,m}^{D}$, which in turn implies that the formula $\neg \exists \bar z \, \gamma(\bar z)$ obtained from $\neg \exists \bar y \, \gamma(\bar y)$ after renaming each constant $c \in \adom{E} = \aadom{E}$ to the variable $x_c$ is a conjunct of $\Phi_{E,m}^{D}(\bar x)$. Since $F \subseteq D_E$, we conclude that $D_E \models \exists \bar z \, \gamma(\bar z)$, which implies that $D_E \not\models \exists \bar x \, \Phi_{E,m}^{D}(\bar x)$. But this contradicts the fact that $D_E \models \exists \bar x \, \Phi_{E,m}^{D}(\bar x)$. This completes the proof that $C$ is $(n,m)$-embeddable in $D$.
\end{proof}

\bigskip
\noindent
\textbf{\underline{Step 2: The finite set $\dep^{\exists,=}$ of tgds and egds that axiomatizes $C$}}

\smallskip
\noindent 
The next lemma, which is an elimination-of-disjunctions result,  provides a stepping stone towards proving that  the class $C$ is finitely axiomatizable by tgds and egds.

\begin{lemma}\label{lem:tgd-egd-dc-characterization-lemma-2}
Assume that 
	$
	\forall \bar x (\phi(\bar x) \ra \bigvee_{i=1}^{k} \psi_i(\bar x_i)),
	$
	where, for each $i \in [k]$, $\psi_i(\bar x_i)$ is either an equality formula or a non-empty conjunction of atoms, is an edd that belongs to $\dep^\vee$.
	There is $j \in [k]$ such that $
		\forall \bar x (\phi(\bar x)\rightarrow \psi_j(\bar x_j))$
  also belongs to $\dep^\vee$.
\end{lemma}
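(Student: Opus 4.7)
The plan is to follow the structure of Lemma~\ref{lem:aux-lemma-2-full-instersection}, but replacing closure under intersections with closure under direct products, in the spirit of McKinsey's elimination-of-disjunctions method. A preliminary observation is that, for each $j \in [k]$, the sentence $\forall \bar x (\phi(\bar x) \ra \psi_j(\bar x_j))$ is itself a member of $\class{E}_{n,m}$: both its universal variable count and its existential variable count are inherited from the original edd, so it is a legitimate candidate for membership in $\dep^\vee$.

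I argue by contradiction. Suppose that for every $j \in [k]$ the sentence $\forall \bar x (\phi(\bar x) \ra \psi_j(\bar x_j))$ does not belong to $\dep^\vee$. Then for each $j$ there exists a database $D_j \in C$ together with a tuple $\bar b^j$ of elements of $\adom{D_j}$ such that $D_j \models \phi(\bar b^j)$ but $D_j \not\models \psi_j(\bar b^j_j)$, where $\bar b^j_j$ denotes the subtuple indexed by the variables $\bar x_j$. Next, form the iterated direct product $D^* = D_1 \otimes D_2 \otimes \cdots \otimes D_k$; by iterating closure under direct products, $D^* \in C$, hence $D^*$ satisfies the original edd. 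For each variable $x$ occurring in $\bar x$, set $b^*_x = (b^1_x, \ldots, b^k_x) \in \adom{D^*}$, and gather these into a tuple $\bar b^*$. Because an atom holds in $D^*$ precisely when it holds coordinatewise in every factor, we get $D^* \models \phi(\bar b^*)$; applying the edd yields some $j \in [k]$ with $D^* \models \psi_j(\bar b^*_j)$, and this $j$ will furnish the desired contradiction.

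The crux is then to descend from satisfaction in $D^*$ to satisfaction in $D_j$, which splits into two cases according to the shape of $\psi_j$. If $\psi_j$ is an equality $y = z$, then $b^*_y = b^*_z$ amounts to componentwise equality of tuples in the product, and in particular forces $b^j_y = b^j_z$, contradicting $D_j \not\models \psi_j(\bar b^j_j)$. Otherwise $\psi_j$ is $\exists \bar y_j \, \chi_j(\bar x_j, \bar y_j)$ with $\chi_j$ a non-empty conjunction of atoms, and any witnessing tuple $\bar e^*$ in $\adom{D^*}$ for the existential quantifier projects coordinatewise to a tuple $\bar e^j$ in $\adom{D_j}$; since each atom of $\chi_j$ holding in $D^*$ descends to the corresponding atom in each factor, we obtain $D_j \models \chi_j(\bar b^j_j, \bar e^j)$, and therefore $D_j \models \psi_j(\bar b^j_j)$, again a contradiction. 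The main obstacle I anticipate is purely notational, namely tracking the coordinatewise projections carefully through the existential case, since the underlying logic is a direct adaptation of McKinsey's disjunction-elimination trick to direct products of databases.
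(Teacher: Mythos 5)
Your proposal is correct and follows essentially the same route as the paper's proof: McKinsey's elimination of disjunctions via the $k$-fold direct product $D_1 \otimes \cdots \otimes D_k$, with the same coordinatewise descent argument in both the equality and the existential cases. The only differences are notational (tuples $\bar b^j$ in place of the paper's assignment functions $h_j$, and phrasing the contradiction by applying the edd to the product rather than by directly exhibiting that the product violates it), which do not change the substance.
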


\begin{proof} 
	Note that for every $j\in [k]$,
the sentence $\forall \bar x (\phi(\bar x)\rightarrow \psi_j(\bar x_j))$ is either an egd or a tgd, hence it is an edd. Towards a contradiction, assume that none of these sentences belongs to $\dep^\vee$. Since $\dep^\vee$ consists of all edds in $\class{E}_{n,m}$ that are satisfied by every database in $C$, 
for every $j \in [k]$,
		there is an $\ins{S}$-database $D_j \in C$ such that $D_j \not\models \forall \bar x \, (\phi(\bar x)\rightarrow \psi_j(\bar x_j)) $, or, equivalently, $D_j \models \exists \bar x \, (\phi(\bar x) \wedge \neg \psi_j(\bar x_j))$.
		Let
		\[
		E\ =\ D_{1} \otimes \cdots \otimes D_{k}. 
		\]
		Since, by hypothesis, $C$ is closed under direct products, we have that $E \in C$. We will show that $E \not\models \delta$, which leads to a contradiction since $\delta \in \Sigma^{\vee}$ and $E \in C$, which means that $E \models \delta$.

		Since $D_j \models \exists \bar x \, (\phi(\bar x) \wedge \neg \psi_j(\bar x_j))$ for each $j \in [k]$, there is a function $h_j : \bar x \ra \adom{D_j}$ such that $h_j(\phi(\bar x)) \subseteq \facts{D_j}$ and $D_j \models \neg \psi_j(h_j(\bar x_j))$. Let $h : \bar x \ra \adom{E}$ be the function such that, for each variable $x \in \bar x$, we have $h(x)\ =\ (h_1(x),\ldots,h_k(x))$. By the definition of direct products, we have that $h(\phi(\bar x)) \subseteq \facts{E}$. It remains to show that, for each $j \in [k]$, if $\psi_j(\bar x_j)$ is an equality formula $y=z$, then $h(y) \neq h(z)$, while if $\psi_j(\bar x_j)$ is not an equality formula, then there is no extension $h'$ of $h$ such that $h'(\psi_j(\bar x_j)) \subseteq \facts{E}$.
		We proceed by case analysis on the form of $\psi_j(\bar x_j)$.
		
		\medskip
		
		\noindent \textbf{Case 1:} Assume first that $\psi_j(\bar x_j)$ is the equality expression $y=z$.
		By the properties of the function $h_j$, we have that $h_j(y)\not = h_j(z)$. Hence, by the definition of $h$, we conclude that $h(y) \not =  h(z)$.
		
		\medskip
		\noindent \textbf{Case 2:}
		Assume now that $\psi_j(\bar x_j)$ is a formula of the form $\exists \bar y_j \, \chi_j(\bar x_j,\bar y_j)$, and assume, by contradiction, that there exists an extension $h'$ of $h$ such that $h'(\chi_j(\bar x_j,\bar y_j)) \subseteq \facts{E}$. For a constant $c \in \adom{E}$, we write $c[i]$ for the $i$-th component of $c$, which is actually a constant from $\adom{D_i}$.
		We define the function $h'_j : \bar x \cup \bar y_j \ra \adom{D_j}$ such that $h'_j(v) = h'(v)[j]$, for each variable $v \in \bar x \cup \bar y_j$. Since $h'$ is an extension of $h$ and $h_j(v) = h(v)[j]$, for each variable $v \in \bar x$, we conclude that $h'_j$ is an extension of $h_j$.
		By hypothesis, $h'(\alpha) \in \facts{E}$ for each conjunct $\alpha$ of $\chi_j(\bar x_j,\bar y_j)$. Assume that $h'(\alpha) = R(\bar c_i,\ldots,\bar c_m)$. By the definition of the direct product, we conclude that $R(\bar c_1[j],\ldots,c_m[j]) \in \facts{D_j}$. Consequently, $h'_j(\psi_j(\bar x_j)) \subseteq \facts{D_j}$, which in turn implies that $D_j \models \psi_j(h_j(\bar x_j)$, which contradicts the fact that $D_j \not \models \psi_j(h_j(\bar x_j)$. This completes the proof of Lemma~\ref{lem:tgd-egd-dc-characterization-lemma-2}.  
	\end{proof}

We are now ready to complete the proof of Theorem \ref{the:tgd-egd-characterization}.
Define $\dep^{\exists,=}$ to be the set of all tgds and egds occurring in $\dep^{\vee}$, that is,
\[
\dep^{\exists,=}\ =\ \left\{\delta \in \dep^{\vee} \mid \delta \text{ is a tgd or an egd} \right\}.
\]
Note that $\dep^{\exists,=}$ is a finite set because it is a subset of the finite set $\dep^\vee$.
We proceed to show that, for each $\ins{S}$-database $D$, it holds that $D\in C$ if and only if $D\models \dep^{\exists,=}$.



\medskip

($\Rightarrow$) If $D\in C$, then, by definition, $D\models \dep^\vee$; hence, $D \models \dep^{\exists,=}$ since $\dep^{\exists,=} \subseteq \dep^\vee$. 

($\Leftarrow$) Assume now that $D$ is an $\ins{S}$-database such that $D\models \dep^{\exists,=}$. We need to show that $D\in C$. By Lemma 
\ref{lem:tgd-egd-dc-characterization-lemma-1}, it suffices to show that $D\models \dep^\vee$. 
Let $\delta$ be a sentence in $\dep^{\vee}$.  Since $C$ is $1$-critical, we get that $\delta$ is not an implication with the truth constant $\mathsf{false}$ being its conclusion. Therefore, $\delta$ is an edd of the form	
	$
	\forall \bar x (\phi(\bar x) \ra \bigvee_{i=1}^{k} \psi_i(\bar x_i)),
	$
	where each expression $\psi_i(\bar x_i)$ is either an equality  $y=z$ with $y$ and $z$ among the variables in $\bar x$ or a constant-free formula $\exists \bar y_i \, \chi_i(\bar x_i, \bar y_i)$ with $\chi_i(\bar x_i, \bar y_i)$  non-empty, the variables in $\bar x_i$  among those in $\bar x$, and the variables in $\bar y_i$  different from those  in $\bar x$.  By Lemma \ref{lem:tgd-egd-dc-characterization-lemma-2}, 
 there is $j \in [k]$ such that the sentence $
		\forall \bar x (\phi(\bar x)\rightarrow \psi_j(\bar x_j))$ belongs to $\dep^\vee$. Since this sentence is either an egd or a tgd, we have that it belongs to $\dep^{=,\exists}$, and hence, $D\models \forall \bar x (\phi(\bar x)\rightarrow \psi_j(\bar x_j))$. It is obvious that the formula
  $\forall \bar x (\phi(\bar x)\rightarrow \psi_j(\bar x_j))$ logically implies the edd $\forall \bar x (\phi(\bar x) \ra \bigvee_{i=1}^{k} \psi_i(\bar x_i))$, hence $D\models \forall \bar x (\phi(\bar x) \ra \bigvee_{i=1}^{k} \psi_i(\bar x_i))$. This completes the proof of the above claim.

\medskip

Therefore, we have established that the class $C$ is axiomatizable by the set $\dep^{=, \exists}$, which is a finite set of tgds and egds. This completes the proof of Theorem \ref{the:tgd-egd-characterization} .

\subsection{From Theorem~\ref{the:tgd-egd-characterization} to Theorem~\ref{the:tgd+egd-MV86}}

We conclude this section by discussing how we can derive the characterization for full tgds and egds from~\cite{DBLP:journals/acta/MakowskyV86}, that is, Theorem~\ref{the:tgd+egd-MV86} in Section~\ref{sec:ftgds-egds}, by exploiting the characterization for $(n,m)$-tgds and $n$-egds presented above, that is, Theorem~\ref{the:tgd-egd-characterization}.
We first present the following consequence of Theorem~\ref{the:tgd-egd-characterization}, which forms a relevant characterization for finite sets of full tgds and egds in its own right.

\begin{corollary} \label{cor:ftgd+egd}
	Consider a collection $C$ of databases. The following are equivalent:
	\begin{enumerate}
		\item $C$ is finitely axiomatizable by full tgds and egds.
		\item $C$ is $1$-critical, $(n,0)$-local for some integer $n \geq 0$, and closed under direct products.
	\end{enumerate}
\end{corollary}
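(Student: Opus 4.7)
The plan is to derive Corollary~\ref{cor:ftgd+egd} directly from Theorem~\ref{the:tgd-egd-characterization} by specializing the existential parameter to $m = 0$, together with the elementary observation that the syntactic class of full tgds with at most $n$ universally quantified variables coincides with the class of $(n,0)$-tgds, and the class of egds with at most $n$ universally quantified variables coincides with the class of $n$-egds. So the task reduces to translating between the ``full tgds and egds'' framing of the corollary and the ``$(n,0)$-tgds and $n$-egds'' framing of the theorem.

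For the direction $(1) \Rightarrow (2)$, I would take a finite axiomatization of $C$ by full tgds and egds, let $n$ be an upper bound on the number of universally quantified variables occurring in any dependency of this axiomatization, and observe that the full tgds are then $(n,0)$-tgds while the egds are $n$-egds. Applying the direction $(1) \Rightarrow (2)$ of Theorem~\ref{the:tgd-egd-characterization} with parameters $(n,0)$ immediately yields that $C$ is $1$-critical, $(n,0)$-local, and closed under direct products.

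For the converse direction $(2) \Rightarrow (1)$, I would take the integer $n$ witnessing that $C$ is $(n,0)$-local, and apply the direction $(2) \Rightarrow (1)$ of Theorem~\ref{the:tgd-egd-characterization} with the parameters $(n,0)$. This produces a finite set of $(n,0)$-tgds and $n$-egds that axiomatizes $C$. Since every $(n,0)$-tgd is by definition a full tgd (it has no existentially quantified variables) and every $n$-egd is an egd, this set is precisely a finite axiomatization of $C$ by full tgds and egds.

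I do not expect any genuine obstacle, as this is a clean specialization of an already established theorem. The only point worth spelling out is the syntactic correspondence at the boundary $m = 0$: the class $\class{E}_{n,0}$ of existential disjunctive dependencies used in Step~1 of the proof of Theorem~\ref{the:tgd-egd-characterization} contains no authentic existential quantifiers in the disjuncts of its conclusions, since each tuple $\bar{y}_i$ is empty, so the tgds extracted from the axiomatizing set $\dep^{\exists,=}$ are automatically full. This matches the intuition that setting $m = 0$ collapses the $m$-neighbourhood in local embeddability to the witness subdatabase itself, aligning the locality property with the preservation-style content of full dependencies.
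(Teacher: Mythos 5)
Your proposal is correct and matches the paper's intent exactly: the paper states Corollary~\ref{cor:ftgd+egd} as an immediate consequence of Theorem~\ref{the:tgd-egd-characterization} obtained by setting $m=0$, relying on precisely the syntactic identification of $(n,0)$-tgds with full tgds (and $n$-egds with egds) that you spell out. No divergence and no gap.
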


Observe that for obtaining Theorem~\ref{the:tgd+egd-MV86} it suffices to replace in Corollary~\ref{cor:ftgd+egd} the property of $(n,0)$-locality for some integer $n \geq 0$ with the properties of domain independence, modularity, and closure under subdatabases. We proceed to show that this is indeed the case. But let us first state a simple fact that provides a useful characterization of the notion of $(n,0)$-local embeddability.

Recall that, for a schema $\ins{S}$ and an integer $m \geq 0$, the $m$-neighbourhood of an $\ins{S}$-database $D''$ in an $\ins{S}$-database $D'$ with $D'' \subseteq D'$ is the set of $\ins{S}$-databases
\[
\{E \mid D'' \subseteq E,\ E \preceq D' \text{ and } |\aadom{E}| \leq |\aadom{D''}| + m\}.
\] 
Note that if $D'' \subseteq E$, we have that $\aadom{D''} \subseteq \aadom{E}$. Furthermore, if $E$ is in the $0$-neighbourhood of $D''$ in $D'$, then $|\aadom{E}| \leq |\aadom{D''}|$, and thus, $\aadom{D''} = \aadom{E}$. Since both $D''$ and $E$ are contained in $D'$, it follows that $\facts{E} = \facts{D'\restriction \aadom{D''}}=\facts{D''}$, where $D'\restriction \aadom{D''}$ is the restriction of $D'$ over $\aadom{D''}$, that is, the database $\{R(\bar c) \in D' \mid \bar c \in \aadom{D''}^{\ar{R}}\}$. It follows that the $0$-neighborhood of $D''$ in $D'$ is the set of $\ins{S}$-databases
\[
\{E \mid E \preceq D' \text{ and } \facts{E} = \facts{D'\restriction \aadom{D''}}=\facts{D''}\}.
\] 
Consequently, we have that the following holds. 

\begin{lemma}\label{fact:embeddable}
	Consider a collection $C$ of databases over a schema $\ins{S}$. The following are equivalent:
	\begin{enumerate} 
		\item An $\ins{S}$-database $D$ is $(n,0)$-locally embeddable in $C$.
		\item For every $E \preceq D$ with $\aadom{E} = \adom{E}$ and $|\aadom{E}| \leq n$, there is an $\ins{S}$-database $D_E \in C$ such that $\facts{D_E\restriction \aadom{E}}=\facts{E}$.
	\end{enumerate}
\end{lemma}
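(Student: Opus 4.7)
This equivalence is essentially a direct unpacking of the definition of $(n,0)$-local embeddability in light of the analysis of the $0$-neighbourhood performed just before the lemma. First, I would invoke Lemma \ref{lem:adom} to reduce to the case where the candidate subdatabases $E \preceq D$ additionally satisfy $\aadom{E} = \adom{E}$ (with $|\aadom{E}| \leq n$). This reduction removes any ``padding'' domain elements and is what will force the witnessing map $h$ in the definition to collapse to the identity in the key step below.

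For $(2) \Rightarrow (1)$, given such an $E$, let $D_E \in C$ be the database supplied by statement (2), so $\facts{D_E \restriction \aadom{E}} = \facts{E}$; in particular $E \subseteq D_E$. Any $D'$ in the $0$-neighbourhood of $E$ in $D_E$ satisfies $\aadom{E} \subseteq \aadom{D'}$ and $|\aadom{D'}| \leq |\aadom{E}|$, so $\aadom{D'} = \aadom{E}$, and combined with $D' \preceq D_E$ this forces $\facts{D'} = \facts{D_E \restriction \aadom{E}} = \facts{E}$. The identity map on $\aadom{D'} = \aadom{E}$ then witnesses the required embedding into $D$, since $E \preceq D$ gives $\facts{E} \subseteq \facts{D}$.

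For $(1) \Rightarrow (2)$, given such an $E$, let $D_E \in C$ be the database furnished by $(n,0)$-local embeddability, so $E \subseteq D_E$. The key move is to apply the embedding property to the specific member $D' := D_E \restriction \aadom{E}$ of the $0$-neighbourhood of $E$ in $D_E$; one checks directly that $E \subseteq D'$, $D' \preceq D_E$, and $|\aadom{D'}| \leq |\aadom{E}|$, so this $D'$ does lie in the $0$-neighbourhood. Since $\aadom{D'} \subseteq \adom{D'} = \aadom{E}$, the map $h$ supplied by the hypothesis is the identity on all of $\aadom{D'}$, giving $\facts{D'} \subseteq \facts{D}$. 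Because the facts of $D'$ use only constants from $\aadom{E} = \adom{E}$ and $E \preceq D$, this yields $\facts{D'} \subseteq \facts{E}$; the reverse inclusion is immediate from $E \subseteq D'$. Hence $\facts{D_E \restriction \aadom{E}} = \facts{D'} = \facts{E}$, establishing (2).

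The argument is largely routine, and the only delicate point, which I would expect to be the main obstacle, is keeping straight the distinction between $\adom{\cdot}$ and $\aadom{\cdot}$ and the distinction between $\subseteq$ and $\preceq$. Specialising to $\aadom{E} = \adom{E}$ via Lemma \ref{lem:adom}, and then picking $D' = D_E \restriction \aadom{E}$ as the specific witness in the $0$-neighbourhood, are precisely the manoeuvres that make the general embedding $h$ collapse to the identity and the two fact-sets coincide.
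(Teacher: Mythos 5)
Your proof is correct and follows essentially the same route as the paper, which derives the lemma from the observation that every member of the $0$-neighbourhood of $E$ in $D_E$ has active domain $\aadom{E}$ and hence fact set $\facts{D_E\restriction \aadom{E}}$, so the embedding condition collapses to the identity map and the equality $\facts{D_E\restriction \aadom{E}}=\facts{E}$. Your write-up is, if anything, slightly more careful than the paper's in distinguishing $\facts{D_E\restriction\aadom{E}}$ from $\facts{E}$ before the two are shown to coincide.
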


By exploiting the above lemma, we can show that $(n,0)$-locality can be replaced by the properties of domain independence, modularity, and  closure under subdatabases.

\begin{proposition} \label{prop:(n,0)-local-char} 
	Consider a collection $C$ of databases over a schema $\ins{S}$ and an integer $n \geq 0$. The following statements are equivalent:
	\begin{enumerate}
		\item $C$ is $(n,0)$-local.
		\item $C$ is domain independent, $n$-modular, and  closed under subdatabases.
	\end{enumerate}
\end{proposition}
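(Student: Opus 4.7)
The plan is to use Lemma~\ref{fact:embeddable} as the main technical tool in both directions, since it reduces $(n,0)$-local embeddability in $D$ to the simple requirement that for each sufficiently small subdatabase $E \preceq D$ with $\aadom{E}=\adom{E}$, some $D_E \in C$ satisfies $\facts{D_E \restriction \aadom{E}} = \facts{E}$.

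For the direction $(1) \Rightarrow (2)$, domain independence is immediate from Lemma~\ref{lem:domain-independence}. For closure under subdatabases, given $D \in C$ and $D' \preceq D$, I would verify $(n,0)$-local embeddability of $C$ in $D'$ by taking $D_E := D$ for each relevant $E \preceq D'$; since subdatabase is transitive, $E \preceq D$, and using $\aadom{E}=\adom{E}$ one checks $\facts{D \restriction \aadom{E}} = \facts{E}$. Then $(n,0)$-locality yields $D' \in C$. For $n$-modularity, I would argue the contrapositive: if every $D' \preceq D$ with $|\adom{D'}| \leq n$ lies in $C$, then for each $E \preceq D$ with $\aadom{E}=\adom{E}$ and $|\aadom{E}| \leq n$ we have $|\adom{E}| \leq n$, so $E \in C$ by assumption, and taking $D_E := E$ trivially satisfies the condition of Lemma~\ref{fact:embeddable}; thus $C$ is $(n,0)$-locally embeddable in $D$ and hence $D \in C$.

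For the direction $(2) \Rightarrow (1)$, suppose $C$ is $(n,0)$-locally embeddable in $D$, and assume for contradiction that $D \notin C$. By $n$-modularity there is $D'' \preceq D$ with $|\adom{D''}| \leq n$ and $D'' \notin C$. Using domain independence, I would pass to the database with facts $\facts{D''}$ and domain $\aadom{D''}$; this database remains outside $C$ and is still a subdatabase of $D$, so without loss of generality $\adom{D''} = \aadom{D''}$ with $|\aadom{D''}| \leq n$. Applying Lemma~\ref{fact:embeddable} to $E = D''$, I obtain some $D_{D''} \in C$ with $\facts{D_{D''} \restriction \aadom{D''}} = \facts{D''}$. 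By closure under subdatabases, $D_{D''} \restriction \aadom{D''} \in C$; but this database has domain $\aadom{D''} = \adom{D''}$ and the same facts as $D''$, so it is identical to $D''$ as a database, contradicting $D'' \notin C$.

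The main obstacle is purely a bookkeeping one: one must be careful that the replacement of $D''$ by its active-domain version preserves the relation $\preceq D$, and that the restriction $D_{D''} \restriction \aadom{D''}$ really coincides with $D''$ (rather than merely having the same facts) once $\adom{D''} = \aadom{D''}$ is enforced. Beyond these verifications, the argument is a clean synthesis of the three structural properties through Lemma~\ref{fact:embeddable}.
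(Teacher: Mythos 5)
Your proposal is correct and follows essentially the same route as the paper: both directions are routed through Lemma~\ref{fact:embeddable}, with domain independence obtained from Lemma~\ref{lem:domain-independence}, closure under subdatabases and $n$-modularity witnessed by the choices $D_E = D$ and $D_E = E$ respectively, and the converse direction combining modularity, closure under subdatabases, and domain independence exactly as in the paper (you merely phrase it contrapositively). The bookkeeping points you flag (replacing $D''$ by its active-domain version while preserving $\preceq D$, and identifying $D_{D''}\restriction \aadom{D''}$ with $D''$) do go through.
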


\begin{proof} 
	We first show the direction $(1) \Rightarrow (2)$. Since, by hypothesis, $C$ is $(n,0)$-local, Lemma~\ref{lem:domain-independence} implies that $C$ is domain independent. 
	To show that $C$ is $n$-modular, let $D$ be an $\ins{S}$-database such that every $D' \preceq D$ with $\adom{D'} \leq n$ belongs to $C$. We proceed to show that $D \in C$ by showing that $C$ is $(n,0)$-locally embeddable in $D$. This follows from Lemma~\ref{fact:embeddable} by taking $D_E = D'$.  
	Finally, to show that $C$ is closed under subdatabases, consider an arbitrary $\ins{S}$-database $D \in C$ and let $D'$ be a subdatabase of $D$. We proceed to show that $D' \in C$ by showing that $C$ is $(n,0)$-locally embeddable in $D'$. The latter again follows from Lemma~\ref{fact:embeddable} by taking $D_E = D$.
	
	We now show $(2) \Rightarrow (1)$. Assume that $C$ is domain independent, $n$-modular, and closed under subdatabases. We need to show that if $C$ is $(n,0)$-locally embeddable in an $\ins{S}$-database $D$, then $D \in C$. By Lemma~\ref{fact:embeddable}, for every $E \preceq D$ with $\aadom{E} = \adom{E}$ and $|\aadom{E}|\leq n$, there is an $\ins{S}$-database $D_E \in C$ such that $\facts{D_E\restriction \aadom{E}}=\facts{E}$. Since $C$ is closed under subdatabases, we have that $D_E\restriction \aadom{E}$ belongs to $C$. Since $C$ is domain independent and $\facts{D_E\restriction \aadom{E}}=\facts{E}$, we get that $E \in C$. Thus, every subdatabase $E$ of $D$ with at most $n$ elements in its active domain is in $C$. Since $C$ is $n$-modular, it follows that $D$ is also in $C$.
\end{proof}

Theorem~\ref{the:tgd+egd-MV86} readily follows from Corollary~\ref{cor:ftgd+egd} and Proposition~\ref{prop:(n,0)-local-char}.


\section{Concluding Remarks}
When J\'anos Makowsky retired as President of the European Association for Computer Science Logic (EACSL), he gave an invited address at the 2011 CSL conference in which he reflected on his experience in research~\cite{DBLP:conf/csl/Makowsky11}. In particular, he reflected on his work in database theory, highlighted the undecidability of the implication problem for EIDs, and concluded the section on his contributions to database theory by writing that ``After that I tried to learn the true problems of database theory. However, J.\ Ullman changed his mind and declared that Dependency Theory and Design Theory had run their course. As a result, papers dealing with these topics were almost banned from the relevant conferences.'' Here, Makowsky refers to Ullman's invited talk and paper, titled ``Database Theory: Past and Future'', at the 1987 ACM Symposium on Principles of Database Systems (PODS)~\cite{DBLP:conf/pods/Ullman87}. When it came to the past of dependency theory, Ullman stated that ``We quickly learned far more about the subject than was necessary, a process that continues to this day'' and then, contemplating  the future, he delegated dependency theory to the section titled ``Last Gasps of the Dying Swans''. 

The pronouncement of the demise of dependency theory, however, turned out to be premature. Indeed, about a decade later, data dependencies found numerous uses in formalizing and analyzing data inter-operability tasks, such as 
data exchange and data integration (see, e.g., the surveys \cite{DBLP:conf/pods/Kolaitis05,DBLP:conf/pods/Lenzerini02} and the books \cite{DBLP:books/cu/ArenasBLM2014,DBLP:conf/dagstuhl/2013dfu5}). As a matter of fact, the study of data dependencies has enjoyed a renaissance that continues to date. In particular, tgds have been used to specify data transformation tasks (i.e., how data structured under one schema should be transformed into data structured under a different schema) and as building blocks of ontology languages~\cite{DBLP:journals/ai/BagetLMS11,DBLP:journals/ws/CaliGL12,DBLP:journals/ai/CaliGP12}. In the case of data exchange and data integration, a subclass of tgds, called \emph{source-to-target tgds} (s-t tgds), has played a central role. These are the tgds of the form $\forall \bar x\forall \bar y(\phi(\bar x,\bar y)\rightarrow \exists \bar z\psi(\bar x, \bar z))$, where all relations in $\phi(\bar x,\bar y)$ are from a source schema, while all relations in $\psi(\bar x, \bar z)$ are from a disjoint target schema. The study of s-t tgds led to the discovery of new structural properties for this class of tgds, the most prominent of which is the existence of \emph{universal} solutions in data exchange \cite{DBLP:journals/tcs/FaginKMP05}. In turn, this motivated the pursuit of characterizations of s-t tgds and of natural subclasses of them using structural properties relevant to data inter-operability, such as the existence of universal solutions and conjunctive query rewriting  \cite{DBLP:conf/icdt/CateK09,DBLP:journals/cacm/CateK10}. And now we have come full circle with the structural characterizations of arbitrary tgds and egds discussed in this paper. In conclusion, J\'anos Makowsky's work on data dependencies was well ahead of its time.

\bibliographystyle{plain}

\end{document}